\title[Approximation Algorithms for Socially Fair Clustering]{Approximation Algorithms for Socially Fair Clustering}
\newtheorem{claim}[theorem]{Claim}
\newtheorem{observation}[theorem]{Observation}
\def\floor#1{\lfloor {#1} \rfloor}
\def\script#1{\mathcal{#1}}
\def\sep{\;|\;}
\def\E{\mathbf{E}}
\def\Var{\mathbf{Var}}
\def\sB{\script{B}}
\def\nn{\mathrm{NN}}
\def\cllp{ClusterLP}
\def\optcllp{BasicClusterLP}
\newcommand\faircost{{\operatorname{fair-cost}}}
\newcommand\cost{{\operatorname{cost}}}
\newcommand\vol{{\operatorname{vol}}}
\def\R{\script{R}}
\def\g{{\boldsymbol{g}}}
\begin{document}

\date{}

\maketitle

\begin{abstract}
We present an $(e^{O(p)} \frac{\log \ell}{\log\log\ell})$-approximation algorithm for socially fair clustering with the $\ell_p$-objective.
In this problem, we are given a set of points in a metric space.
Each point belongs to one (or several) of $\ell$ groups.
The goal is to find a $k$-medians, $k$-means, or, more generally, $\ell_p$-clustering that is simultaneously good for all of the groups. More precisely, we need to find a set of $k$ centers $C$ so as to minimize the maximum over all groups $j$ of $\sum_{u \text{ in group }j} d(u,C)^p$.

The socially fair clustering problem was independently proposed by~\cite*{ghadiri2020fair} and \cite*{abbasi2020fair}. Our algorithm improves and generalizes  their $O(\ell)$-approximation algorithms for the problem. 

The natural LP relaxation for the problem has an integrality gap of $\Omega(\ell)$. In order to obtain our result, we introduce a strengthened LP relaxation and show that it has an integrality gap of $\Theta(\frac{\log \ell}{\log\log\ell})$ for a fixed~$p$. Additionally, we present a bicriteria approximation algorithm, which generalizes the bicriteria approximation of~\cite{abbasi2020fair}.
\end{abstract}

\section{Introduction}\label{sec:intro}
Due to increasing use of machine learning in decision making, there has been an extensive line of research on the societal aspects of algorithms~\citep{galindo2000credit,kleinberg2017inherent,chouldechova2017fair,dressel2018accuracy}. The goal of research of this area is to understand, on one hand, \emph{what it means for an algorithm to be fair and how to formally define the fairness requirements} and, on the other hand, \textit{how to design efficient algorithms that meet the fairness requirements}. By now, there have been a rich amount of studies on both fronts for different machine learning tasks~\citep{dwork2012fairness,feldman2015certifying,hardt2016equality,chierichetti2017fair,kleindessner2019fair,jung2019center,har2019near,aumuller2020fair}. 
We refer the reader to~\citep{chouldechova2018frontiers,kearns2019ethical} for an overview of different notions of fairness and their computational aspects.

In this work, we study clustering under the notion of group fairness called {\em social fairness} or {\em equitable group representation}, introduced independently by~\citeauthor{abbasi2020fair} and~\citeauthor{ghadiri2020fair}. In this framework, we need to cluster a
dataset with points coming from a number of groups. 
\citeauthor{ghadiri2020fair} observed that both standard clustering algorithms and existing fair clustering algorithms (e.g., those that ensure that various groups are fairly represented in all clusters; see~\citep{chierichetti2017fair}) incur higher clustering costs for certain protected groups (e.g., groups that are defined by a sensitive attribute such as gender and race). This lack of fairness has motivated the study of clustering that minimizes 
the maximum clustering cost across different demographic groups. 
The objective of socially fair clustering was previously studied by~\cite*{anthony2010plant} in the context of {\em robust clustering}. In this setting, a set of possible scenarios is provided and the goal is to find a solution that is simultaneously good for all scenarios.       
\begin{definition}[Socially Fair $\ell_{p}$-Clustering] \label{def:social-fair-clustering}
We are given a metric space $(P, d)$ on $n$ points, $\ell$ groups of points $P_1, \cdots, P_{\ell} \subseteq P$, and the desired number of centers $k$.
Additionally, we are given non-negative demands or weights $w_j: P_j \rightarrow \mathbb{R}$ for points in each group $P_j$. We define the $\ell_p$-cost of group $P_j$ w.r.t. a set of centers $C \subseteq P$ as
$$\cost(C, w_j) = \sum_{u\in P_j} w_j(u) \cdot d(u, C)^p,$$
where $d(u, C) = \min_{c\in C} d(u, c)$. In socially fair $\ell_{p}$-clustering, the goal is to pick a set of $k$ centers $C \subseteq P$  so as to minimize $\max_{j\in[\ell]} \cost(C, w_j)$, which we call the fair cost:
\begin{align}\label{eq:clustering-cost}
\faircost(C) = \faircost(C, \{w_j\}_{j\in [\ell]}):= \max_{j\in [\ell]} \sum_{u\in P_j} w_j(u) \cdot d(u, C)^p.
\end{align}
\end{definition}

The problem was first studied in the context of robust clustering.  \citeauthor{anthony2010plant} introduced it under the name ``robust $k$-medians'' (for $p=1$) and designed an $O(\log |P| + \log \ell)$-approximation algorithm for it (for $p=1$).
Recently, \citeauthor{abbasi2020fair}\footnote{\citeauthor{abbasi2020fair} call this problem \textit{fairness under equitable group representation}.} and \citeauthor{ghadiri2020fair} 
considered this problem in the context of social fairness.
They studied this problem for the most interesting cases, when $p\in\{1,2\}$ and $p =2$, respectively; the papers assumed that $w_j = 1/|P_j|$. In these special cases, the problem is called socially fair $k$-medians ($p=1$) and socially fair $k$-means ($p=2$). Both papers presented $O(\ell)$-approximation algorithms for the variants of fair clustering they study. 
Further, \citeauthor{abbasi2020fair} gave a bicriteria $(2/\gamma, 1/(1-\gamma))$-approximation algorithm for fair $k$-medians and $k$-means. Their algorithm finds a solution with at most $k/(1-\gamma)$ centers, whose cost is at most $2/\gamma$ times the optimal cost for $k$ centers. Also, \citeauthor{ghadiri2020fair} designed a fair variant of Lloyd's heuristic for fair $k$-means clustering. 
Further, \cite*{bhattacharya2014new} showed that it is hard to approximate socially fair $k$-medians by a factor better than $\Omega(\frac{\log \ell}{\log \log \ell})$ unless $\mathrm{NP} \subseteq \cap_{\delta>0} \mathrm{DTIME}(2^{n^\delta})$. This hardness result holds even for uniform and line metrics.
We remark that since the result by~\citeauthor{bhattacharya2014new} holds for uniform metrics -- metrics where all distances are either $0$ or $1$ -- the same $\Omega(\log \ell / \log\log \ell)$-hardness result also applies to socially fair clustering with other values of $p>1$.
%This essentially shows that our approximation guarantee for socially fair $k$-median is tight up to a constant factor.

\paragraph{Our Results.}
In this paper, we consider socially fair clustering for arbitrary $p\in[1,\infty)$ and arbitrary demands/weights. We do not require groups  $P_j$ to be disjoint. 

Our main contribution is an $(e^{O(p)} \frac{\log \ell}{\log\log \ell})$-approximation algorithm for socially fair $\ell_p$-clustering problem. For socially fair $k$-means, our algorithm improves the $O(\ell)$-approximation algorithms by~\citeauthor{abbasi2020fair} and~\citeauthor{ghadiri2020fair} to $O(\frac{\log\ell}{\log\log\ell})$. For socially fair $k$-medians, our algorithm improves the $O(\log n + \log \ell)$-approximation algorithm by~\citeauthor{anthony2010plant} to $O(\frac{\log\ell}{\log\log\ell})$.
The hardness result by~\citeauthor{bhattacharya2014new} shows that our approximation guarantee for socially fair $k$-clustering with $p=O(1)$, which includes $k$-medians and $k$-means, is optimal up to a constant factor unless $\mathrm{NP} \subseteq \cap_{\delta>0} \mathrm{DTIME}(2^{n^\delta})$.
We also give an $(e^{O(p)}/\gamma, 1/(1-\gamma))$-bicriteria approximation for the socially fair $\ell_p$-clustering problem (where $\gamma \in (0, 1)$). This result generalizes the bicriteria guarantee of~\citeauthor{abbasi2020fair} for the $k$-means and $k$-medians objectives to the case of arbitrary $p \geq 1$ and arbitrary demands $w_j$. 

\begin{theorem}\label{thm:main-approx}
There exists a polynomial-time algorithm that computes an $(e^{O(p)}\frac{\log \ell}{\log\log \ell})$-approximation for the socially fair $\ell_p$-clustering problem (where $\ell$ is the number of groups).
\end{theorem}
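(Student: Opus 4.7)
The plan is to prove Theorem~\ref{thm:main-approx} by (i) formulating a strengthened LP relaxation whose integrality gap is $O(\log\ell/\log\log\ell)$ rather than $\Omega(\ell)$, (ii) solving this LP (or a polynomial-size proxy) in polynomial time, and (iii) rounding the fractional solution via a randomized scheme whose analysis combines a Chernoff-type deviation bound with standard $\ell_p$ triangle-inequality manipulations that absorb an $e^{O(p)}$ factor.

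First, I would formulate a cluster-based LP (the \cllp{} suggested by the macro names), in which variables are indexed by \emph{fractional clusters}—pairs $(c,S)$ consisting of a center $c \in P$ together with an assignment pattern of points to $c$—rather than by the point-center pairs of the natural LP. Writing constraints that (a) every point is fractionally covered, (b) the total mass of chosen clusters is at most $k$, and (c) the per-group cost $\sum_{u\in P_j} w_j(u) d(u,C)^p$ contributed by each fractional cluster is at most the LP variable $\faircost_{\lp}$, captures enough global structure to break the $\Omega(\ell)$ integrality barrier. Although \cllp{} has exponentially many variables, I would obtain a polynomial-time solution either via ellipsoid with a separation oracle for the dual, or by arguing that a polynomial-size sparsified version \spcllp{} has essentially the same value and can be solved directly.

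The main algorithmic step is the randomized rounding. I would independently sample one integral realization from each fractional cluster, open the resulting $\le k$ centers (up to a small inflation that can be handled by standard tricks), and assign each point to its nearest open center. For any fixed group $j$, linearity of expectation together with a weighted $(1{+}\varepsilon)/(1{+}1/\varepsilon)$-splitting of $(d(u,c)+d(c,c'))^p$ with $\varepsilon=\Theta(1/p)$ gives
\begin{equation*}
\mathbf{E}[\cost(C,w_j)] \;\le\; e^{O(p)} \cdot \faircost_{\lp},
\end{equation*}
and this is where the $e^{O(p)}$ factor of the theorem enters. The critical step is then the concentration analysis: each group's cost is a sum of independent, non-negative, but unbounded random contributions, so I would apply a moment-method / Chernoff bound of the form $\Pr[\text{cost exceeds }t\cdot \faircost_{\lp}] \le \exp(-\Omega(t\log t))$. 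Choosing $t=\Theta(\log\ell/\log\log\ell)$ makes this probability smaller than $1/(2\ell)$, and a union bound over the $\ell$ groups yields, with probability at least $1/2$, a simultaneous bound of $e^{O(p)}\cdot(\log\ell/\log\log\ell)\cdot \faircost_{\lp}$ on every group, which is exactly the claimed approximation since $\faircost_{\lp} \le \opt$.

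The hard part, and where the design of \cllp{} really pays off, is the concentration step. Under the natural LP, a single fractional assignment can contribute on the order of $\faircost_{\lp}$ to a group's cost, so subgaussian concentration fails and the best one obtains is $O(\ell)$. The cluster formulation upper-bounds the per-cluster contribution to each group by the fractional group cost itself in a way that makes the $\exp(-\Omega(t\log t))$ tail—characteristic of the balls-and-bins regime responsible for $\log\ell/\log\log\ell$—available. Carefully matching the $\varepsilon=\Theta(1/p)$ splitting in the expectation bound with the variance bound used in the Chernoff step so that the final loss is $e^{O(p)}$ and not a larger function of $p$ is the main technical subtlety I would need to verify, but it follows the $\ell_p$ triangle-inequality approach already standard in the non-fair clustering literature.
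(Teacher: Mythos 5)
Your high-level plan is the right one: strengthen the natural LP so concentration becomes possible, round independently, bound each group's cost by Bennett/Chernoff with the $\exp(-\Omega(t\log t))$ tail, and set $t=\Theta(\log\ell/\log\log\ell)$. The ``balls-and-bins regime'' intuition is exactly what drives the paper's bound, and the $e^{O(p)}$ from $\ell_p$ triangle-inequality splitting is also correct in spirit. But there are two substantive gaps.

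First, the strengthening you propose is a configuration/cluster LP with exponentially many variables $(c,S)$, solved via ellipsoid or sparsification. The paper does something much lighter: it keeps exactly the variables $x_{uv}, y_v$ of the natural LP and only adds a family of \emph{distance-cap} constraints of the form $x_{vu}=0$ whenever $d(u,v)>\lambda\Delta_z(v)$, where $\Delta_z(v)$ is the smallest radius $r$ such that the ``volume'' $\max_j \sum_{u\in \sB(v,r)} w_j(u)\cdot r^p$ reaches a guessed budget $z\approx z^*$. This is polynomial-size by construction, so no separation oracle or sparsification argument is needed. Your cluster LP might also work, but you would still have to prove the separation oracle exists, that the LP value is a valid lower bound, and — more importantly — that an integral sample from a fractional cluster has bounded per-group contribution, which you have not done.

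Second, and this is the real gap: you assert that the cluster formulation ``upper-bounds the per-cluster contribution to each group by the fractional group cost itself,'' but you give no argument for it, and without it the Bennett step does not go through. In the paper this bound is the main technical content and does not come for free even with the strengthened LP. It requires (i) two consolidation passes in the style of \cite{charikar2002constant} — first moving demands to a well-separated set $P'$ of representative locations, then moving fractional centers onto $P'$ to get a $(1-\gamma)$-restricted solution — and (ii) combining the $\Delta_z$ constraints with the definition of $\vol_v$ to show (Lemma~\ref{lem:cost-bound}) that for every $v\in P'$ and every group $j$, the deterministic bound $w'_j(v)\,d(v,v')^p \le (2\cdot 4^p + 8^p/\gamma)\,z^*$ holds, where $v'$ is $v$'s nearest neighbor in $P'$. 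This per-point almost-sure bound $M=e^{O(p)} z^*$ is precisely what Bennett's inequality needs. Your outline skips the consolidation entirely; without it, a single point can carry a huge demand relative to its assigned cluster, the almost-sure bound $M$ blows up, and the tail probability no longer drops below $1/(2\ell)$ at $t=\Theta(\log\ell/\log\log\ell)$. You flagged this as ``the main technical subtlety I would need to verify,'' and indeed it is — but it is not a verification, it is the theorem. Also note that the paper's rounding does not just sample clusters independently; it builds a nearest-neighbor forest on $P'$, two-colors each tree by layers so that every $v$ has its mate $v'$ on the opposite side, and keeps all of one side deterministically open, guaranteeing $d(v,C)\le d(v,v')$ always, not merely in expectation. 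A pure independent-cluster sample would not give this worst-case guarantee and would need a repair step you have not described.
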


\begin{theorem}\label{thm:bicriteria-approx}
There exists an algorithm that computes a bicriteria $(e^{O(p)}/\gamma, 1/(1-\gamma))$-approximation for the socially fair $\ell_p$-clustering problem (where $\gamma\in (0,1)$). The algorithm finds a solution with at most $k/(1-\gamma)$ centers, whose cost is at most $e^{O(p)}/\gamma$ times the optimal cost for $k$ centers.
\end{theorem}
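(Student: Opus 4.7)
}
My plan is to round the natural LP relaxation using a greedy ball-packing (``filtering'') argument that exploits the extra center budget to bypass the $\Omega(\ell)$ integrality gap that blocks the single-criterion version.

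First, I would write the standard LP: variables $x_c \in [0,1]$ for each potential center $c$ and $y_{u,c} \in [0,1]$ for each (point, center) pair, with constraints $\sum_c y_{u,c} = 1$, $y_{u,c} \leq x_c$, $\sum_c x_c \leq k$, and objective $\min B$ subject to $\sum_{u \in P_j} w_j(u) \sum_c y_{u,c} d(u,c)^p \leq B$ for every group $j \in [\ell]$. Let $(x^*, y^*, B^*)$ be an optimal LP solution, so $B^* \leq \opt$. For each point $u$ I would define its fractional connection cost $D_u = \sum_c y^*_{u,c} d(u,c)^p$ and the filtered ball $B_u = \{c : d(u,c)^p \leq D_u/\gamma\}$. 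Markov's inequality then yields $\sum_{c \in B_u} y^*_{u,c} \geq 1-\gamma$, and hence $\sum_{c \in B_u} x^*_c \geq 1-\gamma$.

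Next, I would greedily build a representative set $S$ by processing points in order of increasing $D_u$: at each step take the smallest-$D_u$ uncovered point $u$, add it to $S$, and mark every $v$ with $B_v \cap B_u \neq \emptyset$ as covered by $u$. The family $\{B_u : u \in S\}$ is pairwise disjoint by construction, so
\[
(1-\gamma)\,|S| \;\leq\; \sum_{u \in S}\sum_{c \in B_u} x^*_c \;\leq\; \sum_c x^*_c \;\leq\; k,
\]
giving $|S| \leq k/(1-\gamma)$. For each $u \in S$ I open an arbitrary $c_u \in B_u$ and return $C = \{c_u : u \in S\}$. For any $v$ covered by $u$ the greedy order guarantees $D_u \leq D_v$; choosing any $c^* \in B_v \cap B_u$ and applying the triangle inequality twice gives $d(v, c_u) \leq 3(D_v/\gamma)^{1/p}$, whence $d(v, C)^p \leq 3^p\, D_v/\gamma = e^{O(p)}\, D_v/\gamma$. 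Summing over $v \in P_j$ weighted by $w_j(v)$ then bounds $\cost(C, w_j)$ by $(e^{O(p)}/\gamma)\, B^* \leq (e^{O(p)}/\gamma)\,\opt$ for every $j$.

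The main subtlety is coupling these two bounds: the ball-packing bound on $|S|$ requires the balls $\{B_u\}_{u \in S}$ to be pairwise disjoint, while the per-point cost bound requires every covered point to have a covering center close by. Processing in increasing order of $D_u$ is what makes both work simultaneously --- the greedy covering rule forces disjointness, and the ordering enforces $D_u \leq D_v$ whenever $u$ covers $v$, keeping the loss from the $\ell_p$-triangle inequality bounded by a factor depending only on $p$. The argument uses only the natural LP; no strengthening such as the one employed for Theorem~\ref{thm:main-approx} is required for the bicriteria guarantee.
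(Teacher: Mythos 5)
Your proof is correct, and it takes a genuinely different (and more self-contained) route than the paper. The paper's proof reuses the entire Step~1 machinery: it solves the strengthened LP $\cllp(\{w_j\},\{\Delta_{z_\g}(u)\},2)$ (which requires a guessed budget $z_\g\in[z^*,2z^*]$), runs the location- and center-consolidation algorithms to obtain a $(1-\gamma)$-restricted solution supported on $P'$, and then simply outputs $C=P'$; the bound $|P'|\le k/(1-\gamma)$ follows from $y'_u\ge 1-\gamma$ on $P'$ and $\sum_u y'_u\le k$, and the cost bound follows from Lemma~\ref{lem:step1} applied with $z'=0$, yielding $\faircost(C,\{w_j\})\le\frac{2^{2p-1}}{\gamma}z^*$. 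Your argument instead works directly from the natural LP $\optcllp$ with no budget guess and no strengthened constraints: the Markov filtering step gives $\sum_{c\in B_u}x^*_c\ge 1-\gamma$, the greedy processing in increasing order of $D_u$ makes $\{B_u\}_{u\in S}$ pairwise disjoint (giving $|S|\le k/(1-\gamma)$) while simultaneously guaranteeing $D_u\le D_v$ whenever $u$ covers $v$, and two triangle-inequality hops give $d(v,C)^p\le 3^p D_v/\gamma$, hence $\cost(C,w_j)\le(3^p/\gamma)B^*\le(3^p/\gamma)z^*$. Both constants are $e^{O(p)}/\gamma$ ($2^{2p-1}/\gamma$ versus $3^p/\gamma$; you could shave yours to $2^p/\gamma$ by opening $u$ itself rather than an arbitrary $c_u\in B_u$, since every point is a candidate center). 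The paper actually anticipates your route in a remark (``the bicriteria approximation can be achieved from rounding $\optcllp$\ldots''), but opts to reuse $\cllp$ for expository economy. What your version buys is a stand-alone proof that avoids the strengthened LP, the $\Delta_z$ machinery, and the parameter $z_\g$ altogether; what the paper's version buys is zero additional work once Step~1 is in place. Both are valid proofs of the theorem.
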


Below, we will mostly focus on proving Theorem~\ref{thm:main-approx}. We prove Theorem~\ref{thm:bicriteria-approx} in Section~\ref{sec:bicriteria}.
Our algorithms are based on linear programming.
However, as shown by~\citeauthor{abbasi2020fair}, the integrality gap of the natural LP relaxation for the socially fair $\ell_p$-clustering is $\Omega(\ell)$.
In order to get an approximation factor better than $\Theta(\ell)$, we strengthen the LP by introducing an extra set of constraints. Loosely speaking, new constraints require that each point $u$ be connected only to centers at distance at most $\Delta$ from $u$, where the value of $\Delta$ depends on the specific point $u$ (we discuss these constraints in detail below). Once we solve the LP relaxation with additional constraints, we apply the framework developed by
\cite*{charikar2002constant} for the $k$-medians problem. Using their framework, we transform the problem instance and the LP solution. We get an instance with a set of points $P'$ and an LP solution that ``fractionally'' opens a center at every point $u\in P'$; specifically, each point $u\in P'$ will be at least $(1-\gamma)$-fractionally open where $\gamma=1/10$.
Now to obtain our approximation results, we use independent sampling in combination with some techniques from \cite{charikar2002constant}. The analysis crucially uses the LP constraints we introduced. Our bicriteria approximation algorithm simply outputs set $C=P'$.

\paragraph{Related Work.}
Clustering has been an active area of research in the domain of fairness for unsupervised learning. One notion of group fairness for clustering, introduced by \cite{chierichetti2017fair}, requires that output clusters are balanced. This notion of fairness has been extended in a series of papers~\citep{abraham2019fairness,bercea2019cost,bera2019fair,schmidt2019fair,backurs2019scalable,ahmadian2019clustering,huang2019coresets}. Another well-studied notion of group fairness for clustering requires the chosen centers fairly represent the underlying population. Various aspects of clustering under this notion of fairness have been studied in the literature~\citep{hajiaghayi2010budgeted,krishnaswamy2011matroid,chen2016matroid, krishnaswamy2018constant,kleindessner2019fair,chiplunkar2020solve,jones2020fair}. 
There has been also extensive research on other notions of fairness for clustering~\citep{chen2019proportionally,jung2019center,mahabadi2020individual,micha2020proportionally,kleindessner2020notion,brubach2020pairwise,anderson2020distributional}. 

The $k$-clustering problem with the $\ell_p$-objective $\sum_{u\in P} d(u, C)^p$, which is a natural generalization of $k$-medians, $k$-means, and $k$-center, is a special case of socially fair $\ell_p$-clustering when the number of groups $\ell$ is equal to one. As observed by~\cite{chakrabarty2019approximation}, a slightly modified variants of the classic algorithms by~\cite{charikar2002constant} and~\cite{jain2001approximation} for $k$-medians gives an $e^{O(p)}$-approximation for $k$-clustering with the $\ell_p$-objective. We remark that our algorithm for socially fair $\ell_p$-clustering also gives $e^{O(p)}$ approximation when $\ell=1$.   

\section{Preliminaries}
We denote the distance from point $u$ to set $C$ by 
$$d(u, C) = \min_{v\in C} d(u,v).$$
To simplify notation, we will assume below that weights $w_j$ are defined on the entire set $P$, but $w_j(u) = 0$ for $u\notin P_j$. We denote $w(v) = \sum_{j\in[\ell]} w_j(v)$. We will use the following definitions and results in the paper.

\begin{definition}[approximate triangle inequality]\label{def:approx-tri-ineq}
A distance function satisfies the $\alpha$-approximate triangle inequality over a set of points $P$ if, $\forall u,v,w\in P, d(u,w) \leq \alpha \cdot (d(u,v) + d(v,w))$
\end{definition}

\begin{claim}[Corollary A.1 in~\citep{makarychev2019performance}]\label{clm:tri-ineq}
Let $(P, d)$ be a metric space. Consider distance function $d(u,v)^p$. It satisfies the $\alpha_p$-approximate triangle inequality for $\alpha_p = 2^{p-1}$.%\footnote{\avnote{may want to include the more general relaxed triangle inequality (e.g.,~\cite{makarychev2019performance}, Corollary A.2)}}
\end{claim}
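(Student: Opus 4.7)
The plan is to reduce the claim to two standard facts: the ordinary triangle inequality for $d$, and the convexity of $x \mapsto x^p$ for $p \geq 1$. First I would fix three points $u,v,w \in P$ and set $a = d(u,v)$ and $b = d(v,w)$. Since $(P,d)$ is a metric space, the usual triangle inequality gives $d(u,w) \leq a + b$, and hence
\begin{equation*}
d(u,w)^p \;\leq\; (a+b)^p,
\end{equation*}
because $x \mapsto x^p$ is monotone nondecreasing on $[0,\infty)$.

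Second, I would show that for all nonnegative reals $a,b$ and all $p \geq 1$,
\begin{equation*}
(a+b)^p \;\leq\; 2^{p-1}(a^p + b^p).
\end{equation*}
The cleanest way is to invoke convexity of $\varphi(x) = x^p$ on $[0,\infty)$: Jensen's inequality at the midpoint gives $\varphi\!\left(\tfrac{a+b}{2}\right) \leq \tfrac{1}{2}(\varphi(a) + \varphi(b))$, i.e., $\left(\tfrac{a+b}{2}\right)^p \leq \tfrac{a^p + b^p}{2}$, and multiplying both sides by $2^p$ yields the desired inequality. (Alternatively, one could use the power-mean inequality, or, for $a+b>0$, normalize by $a+b$ and observe that $t^p + (1-t)^p \geq 2^{1-p}$ for $t \in [0,1]$.)

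Combining the two displays, $d(u,w)^p \leq 2^{p-1}(a^p + b^p) = 2^{p-1}(d(u,v)^p + d(v,w)^p)$, which is exactly the $\alpha_p$-approximate triangle inequality with $\alpha_p = 2^{p-1}$ from Definition~\ref{def:approx-tri-ineq}. There is no real obstacle here; the only subtlety worth flagging is that the proof uses $p \geq 1$ in an essential way (to ensure convexity of $x^p$) — for $p < 1$ the reverse inequality would hold and the constant would have to change. The constant $2^{p-1}$ is moreover tight, as witnessed by $a = b$.
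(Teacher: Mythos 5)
Your proof is correct and is the standard argument (triangle inequality for $d$ plus convexity of $x\mapsto x^p$, i.e. $(a+b)^p\le 2^{p-1}(a^p+b^p)$). The paper does not reprove this claim but cites it from \citep{makarychev2019performance}, where the same convexity/power-mean argument is used, so your approach matches the intended one.
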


\begin{theorem}[Bennett's Inequality~\citep{bennett1962}; also see Theorem 2.9.2 in~\citep{vershynin2018high}]\label{thm:Bennett}
Let $X_1,\dots,X_n$ be independent mean-zero random variables and $S=\sum_{i=1}^n X_i$. Assume that (a) $\Var[S]\leq \sigma^2$ and (b) for all $i\in [n]$, $|X_i|\leq M$ always. Then, for every $t\geq 0$, we have
$$
\Pr[S\geq t]\leq \exp \left(-\frac{\sigma^2}{M^2} h \left( \frac{tM}{\sigma^2} \right)\right)
\leq \exp\left(-\frac{t}{2M}\log\left(\frac{tM}{\sigma^2} + 1\right)\right),
$$
for $h(x) = (1 + x) \ln(1 + x) - x^2 \geq \frac{1}{2} \cdot x \log (x+1)$.
\end{theorem}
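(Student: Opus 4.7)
The plan is the classical Chernoff--Cram\'er method: exponentiate, bound the moment generating function of each summand using $|X_i|\leq M$, combine via independence, and optimize in the Chernoff parameter. The second (weaker) inequality then follows from an elementary comparison between $h$ and $x\log(1+x)$.

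First, for any $\lambda>0$, Markov applied to $e^{\lambda S}$ gives $\Pr[S\geq t]\leq e^{-\lambda t}\prod_i \E[e^{\lambda X_i}]$. To bound each MGF I would establish the pointwise inequality $e^{\lambda y}-1-\lambda y\leq (y^2/M^2)(e^{\lambda M}-1-\lambda M)$, valid for $|y|\leq M$, by expanding both sides as power series in $\lambda$ starting at $k=2$ and comparing term by term via $y^k\leq |y|^k\leq y^2 M^{k-2}$. Taking expectations and using $\E[X_i]=0$, $\Var[X_i]=\sigma_i^2$, together with $1+u\leq e^u$, yields
$$\E[e^{\lambda X_i}]\leq \exp\!\left(\frac{\sigma_i^2}{M^2}\bigl(e^{\lambda M}-1-\lambda M\bigr)\right).$$
Multiplying over $i$ and using $\sum_i \sigma_i^2\leq \Var[S]\leq \sigma^2$ gives
$$\Pr[S\geq t]\leq \exp\!\left(-\lambda t+\frac{\sigma^2}{M^2}\bigl(e^{\lambda M}-1-\lambda M\bigr)\right).$$

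Next I would optimize the right-hand side in $\lambda\geq 0$. Differentiating the exponent and setting to zero gives $\lambda^\star = M^{-1}\ln(1+tM/\sigma^2)$. Substituting back and writing $x=tM/\sigma^2$, the exponent collapses by elementary algebra to $-\frac{\sigma^2}{M^2}\bigl((1+x)\ln(1+x)-x\bigr)=-\frac{\sigma^2}{M^2}h(x)$, yielding the first inequality. For the second inequality it suffices to prove $h(x)\geq \tfrac12 x\log(1+x)$ for $x\geq 0$, i.e., $\log(1+x)\geq 2x/(2+x)$; both sides vanish at $x=0$, and the derivative inequality $1/(1+x)\geq 4/(2+x)^2$ reduces to $(2+x)^2-4(1+x)=x^2\geq 0$, which is trivial.

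I do not expect any serious obstacle, since every step is classical. The two places that require a bit of care are (a) the power-series bound on $e^{\lambda y}-1-\lambda y$, where for odd $k$ one must pass through $|y|^k$ rather than $y^k$ directly, and (b) the $\lambda$-optimization, which, while routine, should be carried out in the variable $x=tM/\sigma^2$ so that the exponent collapses cleanly into $h(x)$.
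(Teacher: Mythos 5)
Your proof is correct and is the standard Chernoff--Cram\'er argument for Bennett's inequality; the paper itself gives no proof, citing \citet{bennett1962} and Theorem~2.9.2 of \citet{vershynin2018high}, and your derivation matches the textbook one (the series comparison for the MGF, the optimization at $\lambda^\star = M^{-1}\ln(1+tM/\sigma^2)$, and the reduction of $h(x)\geq\tfrac12 x\log(1+x)$ to $\log(1+x)\geq 2x/(2+x)$ are all sound). One point worth noting: your optimization correctly yields $h(x)=(1+x)\ln(1+x)-x$, whereas the statement as printed in the paper has a typo, $h(x)=(1+x)\ln(1+x)-x^2$, which cannot be right since that function tends to $-\infty$ and would violate the claimed lower bound $h(x)\geq\tfrac12 x\log(x+1)$; the version you derived is the correct one and is what the paper actually uses downstream.
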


\section{LP Relaxations for Socially Fair \texorpdfstring{$\ell_{p}$-}{}Clustering}\label{sec:lp-gen}
In this section, we describe an LP-relaxation for the socially fair $\ell_p$-clustering.
% It will be convenient for us to raise the objective to the power of $p$;
%that is, we will be minimizing $(\faircost(C, \{w_j\}_{j\in [\ell]}))^p = \max_{j\in [\ell]} \sum_{p\in P} w(p) \cdot d(p, C)^p$.
%Note that an $\eta$-approximation for this objective yields an $\eta^{1/p}$-approximation for the {\em original} socially fair $\ell_p$-clustering objective.
We start with a natural LP relaxation for the problem studied by~\cite{abbasi2020fair}. The relaxation is a generalization of the standard LP for $k$-means and $k$-medians clustering \citep{charikar2002constant}: 
For every $v\in P$, we have an LP variable $y_v$ that denotes whether $v$ belongs to the set of selected centers (in an integral solution, $y_v =1$ if $v\in C$ and $0$ otherwise);
for every $u, v\in P$, we have an LP variable $x_{uv}$ that denotes whether $v$ is the closest center to $u$ in the selected set of centers $C$.
\begin{align*}
\textbf{LP Relaxation: }&\optcllp(\{w_j\}) \\[1mm]
\text{minimize }&\ \max_{j\in[\ell]} \sum_{u\in P_j, v\in P} w_j(u) \cdot d(u, v)^p \cdot x_{uv} \\[1mm]
\text{s.t.}\qquad&\sum_{v\in P} x_{uv} = 1 && \forall u\in P\\
&\sum_{v\in P} y_v \leq k \\
&x_{uv} \leq y_{v} && \forall u,v\in P\\
&x_{uv}, y_{u} \geq 0 && \forall u,v\in P
\end{align*}
Note that the objective is not linear as written. However, we can rewrite this relaxation as a true LP by introducing a new variable $A$, adding LP constrains 
$A \geq \sum_{u\in P_j, v\in P} w_j(u) \cdot d(u, v)^p \cdot x_{uv}$ for each $j$, and then minimizing $A$ in the objective.

As was shown by~\citeauthor{abbasi2020fair}, this LP relaxation has an integrality gap of $\Omega(\ell)$ (for all $p\in[1,\infty)$). As discussed in the Introduction, we strengthen this LP by introducing an extra set of constraints. To describe these constraints, we need some notation. For each point $v\in P$, we denote the ball of radius $r\geq 0$ around $v$ by $\sB(v, r) := \{u\in P \sep d(v,u) \leq r\}$. 
We define the volume of a ball $\sB(v, r)$ as $\vol_v(r) = \max_{j\in [\ell]} \sum_{u \in \sB(v, r)} w_j(u) \cdot r^{p}$.
\begin{remark} We use this definition of $\vol_v(r)$ so as to ensure that the following property holds. Consider a set of centers $C$ and assume that the distance from all points in $\sB(v,r)$ to $C$ is at least $r$. Then the cost of solution $C$ is at least $\vol_v(r)$.
\end{remark}
Clearly, function $\vol_v(r)$ is a non-decreasing function of $r$ that goes to infinity as $r\to \infty$ (unless all demands $w_j$ are identically 0). 
Further, if there are no points at distance exactly $r$ from $v$, then $\vol_v(r)$ is continuous at $r$; otherwise, it is right-continuous at $r$
and may or may not be left-continuous. Denote $\vol_v(r-0) = \lim_{s\to r - 0} \vol_v(s)$ (the limit of $\vol_v$ at $r$ from the left); we let $\vol_v(0-0) = 0$.
Given a {\em budget} parameter $z$, define $\Delta_z(v) = \min\{r: \vol_v(r) \geq z\}$.
Note that $\vol_v(\Delta_z(v) - 0) \leq z \leq \vol_v(\Delta_z(v))$.
In this paper, we will always compute $\vol_v(r)$ and $\Delta_z(u)$ with respect to the original demands $w_j$ (not demands $w'_j$ we define later).
Claim~\ref{claim:up-bd-vaild} explains why we consider $\Delta_z(v)$.
\begin{claim}\label{claim:up-bd-vaild}
Consider an instance of the socially fair $\ell_p$-clustering problem. Denote its optimal cost by $z^*$.  Let $C^*$ be an optimal set of centers and $z \geq z^*$.
Then for all $v\in P$ with $w(v) = \sum_{j\in[\ell]} w_j(v) > 0$, we have $d(v, C^*) \leq 2\Delta_{z}(v)$.
\end{claim}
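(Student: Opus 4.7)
The plan is to argue by contradiction. Suppose some $v\in P$ with $w(v)>0$ satisfies $d(v,C^*)>2\Delta_z(v)$, and set $r:=\Delta_z(v)$. I would aim to derive $z^*>z$, contradicting $z\geq z^*$.

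The first ingredient is the triangle inequality applied to the ball $\sB(v,r)$. For every $u\in\sB(v,r)$ we have $d(u,v)\leq r$, so
$$d(u,C^*)\;\geq\;d(v,C^*)-d(u,v)\;>\;2r-r\;=\;r.$$
Thus every point of $\sB(v,r)$ lies at distance strictly greater than $r$ from every center of $C^*$.

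The second ingredient exploits the definition of $\vol_v$. Let $j^*$ be a group attaining the maximum in $\vol_v(r)=\sum_{u\in\sB(v,r)}w_{j^*}(u)\cdot r^p$. By definition of $\Delta_z(v)=\min\{r:\vol_v(r)\geq z\}$ together with right-continuity of $\vol_v$, we have $\vol_v(r)\geq z$. In the principal case $r>0$ and $z>0$, the inequality $\vol_v(r)\geq z>0$ forces the existence of some $u_0\in\sB(v,r)$ with $w_{j^*}(u_0)>0$. Combining this with the distance bound above,
$$z^*\;\geq\;\cost(C^*,w_{j^*})\;\geq\;\sum_{u\in\sB(v,r)}w_{j^*}(u)\,d(u,C^*)^p\;>\;r^p\sum_{u\in\sB(v,r)}w_{j^*}(u)\;=\;\vol_v(r)\;\geq\;z,$$
where the strictness of the middle inequality comes from the term $u=u_0$, for which $d(u_0,C^*)^p>r^p$ (using $r>0$). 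This contradicts $z\geq z^*$.

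It only remains to dispose of the boundary $r=0$. Then $\vol_v(0)=0\geq z$ forces $z=0$, hence $z^*=0$, which in turn forces every positively weighted point (in particular $v$, because $w(v)>0$) to lie in $C^*$; therefore $d(v,C^*)=0=2\Delta_z(v)$. The main thing to watch in the argument is the strictness of the middle inequality in the display above -- finding a positively weighted point under the group $j^*$ realizing $\vol_v(r)$ -- since it is precisely this strictness that makes the factor of $2$ sufficient.
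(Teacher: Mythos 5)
Your proof is correct and follows essentially the same route as the paper's: argue by contradiction, use the triangle inequality to show every $u\in\sB(v,r)$ is farther than $r$ from $C^*$, pick a group realizing $\vol_v(r)$, and push through the chain of inequalities to derive $z^*>z$. The only cosmetic difference is that the paper folds both cases into one chain of inequalities by choosing $j$ with $w_j(v)>0$ directly when $z=0$, whereas you peel off the boundary case $r=0$ (equivalently $z=0$) and verify the claim there outright; both are fine.
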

\begin{proof}
Assume to the contrary that there exists a point $v$ such that $d(v, C^*) > 2\Delta_{z}(v)$. Denote $R = \Delta_{z}(v)$.
Then, for every $u\in \sB(v, R)$, we have $d(u, C^*) \geq d(v, C^*) - d(u,v) >R$.

If $z > 0$, choose $j$ so that $\sum_{u \in \sB(v, R)} w_j(u) \cdot R^{p} = \vol_v(R) \geq z$. 
If $z = 0$, choose $j$ so that $w_j(v) > 0$. Note that in either case, $S := \sum_{u \in \sB(v, R)} w_j(u) > 0$.
We have,
\begin{align*}
\faircost(C^*) &\geq \cost(C^*, w_j) \geq \sum_{u \in \sB(v, R)} w_j(u) d(u, C^*)^p \\
&\substack{\tiny\text{since } S > 0 \\ >} \sum_{u \in \sB(v, R)} w_j(u) R^p \geq z \geq z^* = \faircost(C^*).    
\end{align*}
We get a contradiction.
\end{proof}
We now state our strengthened LP relaxation, which has a new family of constraints (\ref{const:up-bd}). 

\begin{align}
\textbf{LP Relaxation: }&\cllp(\{w_j\}, \{\Delta_z(u)\}_{u\in P}, \lambda)\nonumber\\[1mm]
\text{minimize }& \ \rlap{$\max_{j\in[\ell]} \sum_{u\in P_j, v\in P} w_j(u) \cdot d(u, v)^p \cdot x_{uv}$} \nonumber\\[1mm]
\text{s.t.}\qquad &\sum_{v\in P} x_{uv} = 1 && \forall u\in P \label{cst:point-assignment}\\
&\sum_{v\in P} y_v \leq k \label{cst:center-count}\\
&x_{uv} \leq y_{v} && \forall u,v\in P \label{cst:center-cap}\\
&x_{vu} = 0 &&\forall u,v \text{ s.t. } w(v) > 0 \text{ and } d(u,v) > \lambda \Delta_z(v) \label{const:up-bd}\\
&x_{uv}, y_{u} \geq 0 && \forall u,v\in P \label{cst:positive}
\end{align}

It follows from Claim~\ref{claim:up-bd-vaild} that this is a valid relaxation if $\lambda\geq 2$ and $z$ is at least $z^*$ (the cost of the optimal fair clustering). We remark that the bicriteria approximation can be achieved from rounding $\optcllp$; however, for the simplicity of exposition, we also use $\cllp$ in our
bicriteria approximation algorithm.

We show that the integrality gap of $\cllp$ is $\Omega(\log \ell/\log\log \ell)$ and thus our rounding procedure is essentially optimal.
\begin{theorem}\label{thm:gap}
The integrality gap of $\cllp$ is 
$\Omega(\log \ell/\log\log \ell)$.
\end{theorem}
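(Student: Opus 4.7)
The plan is to exhibit an explicit family of socially fair $\ell_p$-clustering instances on which $\cllp$ has value $O(1)$ while every integer $k$-center solution has fair-cost $\Omega(\log\ell/\log\log\ell)$. The strategy is to adapt a combinatorial set-cover-style gap to a metric setting and to use the unit self-weight of each representative to make the new constraints~(\ref{const:up-bd}) non-binding against the fractional witness.

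Setting $t=\lfloor\log\ell/\log\log\ell\rfloor$ and choosing $k,m$ in a compatible way, I would take $\ell$ groups $P_j=\{u_j\}$ with $w_j(u_j)=1$ together with $m$ potential centers $c_1,\dots,c_m$. The core gadget is a bipartite graph $G$ between representatives and centers in which every $u_j$ has degree $t$, with two properties: (i) the uniform fractional assignment gives a feasible LP solution of cost $O(1)$; and (ii) for every $k$-subset $C$ of centers there is a representative $u_j$ with graph distance $\Omega(\log\ell/\log\log\ell)$ from every center in $C$. Property (ii) is established by a ball-expansion argument: the ball of radius $d$ around $C$ in a graph with $u$-side degree $t$ has at most $O(k\cdot t^d)$ vertices, which is strictly less than $\ell$ whenever $d\leq\log(\ell/k)/\log t$; for $t=\log\ell/\log\log\ell$ and a suitable $k$ this gives the desired distance bound. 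The existence of such a $G$ follows from a probabilistic construction of random $t$-regular bipartite graphs together with a union bound over $\binom{m}{k}$ subsets, or from an explicit Moore-bound-matching construction. Equip the vertex set of $G$ with its shortest-path metric, with unit edge weights.

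For the LP value, take $y_{c_i}=k/m$ and $x_{u_j,c_i}=k/m$ for $c_i\sim u_j$ (and $0$ otherwise). With parameters chosen so that $tk/m=1$, this is feasible for (\ref{cst:point-assignment})--(\ref{cst:positive}) with cost $t\cdot(k/m)\cdot 1=O(1)$ per group. To verify~(\ref{const:up-bd}), note that $u_j\in\sB(u_j,r)$ with self-weight $1$ forces $\vol_{u_j}(r)\geq r^p$ for every $r>0$, hence $\Delta_z(u_j)\leq z^{1/p}=O(1)$ when $z$ equals the LP objective; since the fractional solution only assigns $u_j$ to distance-$1$ centers, the constraint $d(u_j,c_i)\leq\lambda\Delta_z(u_j)$ holds for any sufficiently large constant $\lambda$. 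On the integer side, property (ii) yields for every integer $k$-center set $C$ a representative $u_j$ with $d(u_j,C)=\Omega(\log\ell/\log\log\ell)$, so $\faircost(C)\geq\Omega\bigl((\log\ell/\log\log\ell)^p\bigr)=\Omega(\log\ell/\log\log\ell)$ for any fixed $p\geq 1$. Dividing by the LP value gives the claimed integrality-gap lower bound.

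The main obstacle is tuning $(k,m,t,\ell)$ so that the ball-expansion bound $d\leq\log(\ell/k)/\log t$ reaches $\log\ell/\log\log\ell$ simultaneously with the LP-feasibility relation $tk/m=\Theta(1)$; this essentially requires $k$ to be small relative to $\ell$ (e.g.\ polylogarithmic) and is handled either by a direct probabilistic construction at the right parameters or by embedding a base gadget in a larger instance. A secondary and routine check is the feasibility of the LP witness under~(\ref{const:up-bd}), which follows from the unit self-weight at each $u_j$ and confirms that the new constraints, while strictly strengthening the LP beyond $\optcllp$, do not remove the $\log\ell/\log\log\ell$ gap.
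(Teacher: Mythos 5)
Your construction is genuinely different from the paper's, but it contains a gap that I believe is fatal: the ball-expansion bound you invoke is incompatible with LP feasibility under constraints~(\ref{cst:point-assignment})--(\ref{cst:center-cap}) and~(\ref{const:up-bd}).

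Your LP witness sets $y_{c_i}=k/m$ on $m$ centers and assigns each $u_j$ fully to its $t$ neighbors; feasibility forces $tk/m\geq 1$, i.e.\ $m\leq tk$. Since the bipartite graph has $\ell t$ edges, the average (hence some) center degree is at least $\ell t/m\geq \ell/k$. More robustly, under~(\ref{const:up-bd}) with $z=O(1)$ and unit self-weights, \emph{any} feasible LP solution of cost $O(1)$ must place $y$-mass $\geq 1$ inside each ball $\sB(u_j,O(1))$, while $\sum_v y_v\leq k$; double counting then forces some vertex to have $\Omega(\ell/k)$ of the $u_j$'s inside its $O(1)$-ball, so the maximum degree $D$ in the relevant region satisfies $D^{O(1)}\gtrsim \ell/k$. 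Your bound ``ball of radius $d$ around $C$ has $O(k\cdot t^d)$ vertices'' silently assumes the $c$-side degree is also $O(t)$; once you account for the actual degree $D\gtrsim(\ell/k)^{\Omega(1)}$, the condition $kD^d<\ell$ forces $d=O(1)$, not $d=\Omega(\log\ell/\log\log\ell)$. Tuning $(k,m,t,\ell)$ cannot escape this: the LP-feasibility relation and the expansion requirement pull $D$ in opposite directions, and the product always collapses the reachable radius to a constant. In short, with singleton unit-weight groups, the fair-cost of an integer solution is $\max_j d(u_j,C)^p$, and you cannot make the metric distance itself carry a $\log\ell/\log\log\ell$ factor once~(\ref{const:up-bd}) is in force.

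The paper sidesteps this entirely by not placing the gap in the metric at all. It takes the \emph{uniform} metric on $n=k+\lfloor\sqrt{k}\rfloor$ points (every distance is $0$ or $1$), so $\Delta_z(u)=1$ for $z=1$ and constraint~(\ref{const:up-bd}) is vacuous, and it defines one group for \emph{every} $t$-subset of points, with $t=\lfloor\sqrt{k}\rfloor$ and $\ell=\binom{n}{t}$. Any $k$ centers miss $t$ points, and the group equal to the complement of $C$ then has cost exactly $t$; meanwhile the symmetric LP solution $y_u=k/n$ has cost $t^2/n\leq 1$ for every group. The $\Omega(\log\ell/\log\log\ell)$ factor thus comes from \emph{group cardinality} in a one-scale metric, not from metric diameter, which is exactly the quantity that the strengthened constraints cannot control. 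If you want to salvage a construction in your style, you would need groups containing many far points simultaneously, at which point you essentially recover the paper's cardinality-based argument.
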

We make the statement of Theorem~\ref{thm:gap} precise and prove it in Appendix~\ref{sec:int_gap}.
\paragraph{Outline of Our Algorithms.} We provide two approximation guarantees for fair $\ell_p$-clustering.
The first algorithm gives an $(e^{O(p)}\frac{\log \ell }{\log\log \ell})$-approximation and the second one gives a bicriteria $(e^{O(p)}/\gamma, 1/(1-\gamma))$-approximation.

We will assume below that we have an approximation $z_\g$ for the cost $z^*$ of the optimal solution such that $z^* \leq z_\g \leq 2z^*$. A standard argument shows that we can do that: we can simply run our algorithm with various values of $z_\g$ and, loosely speaking, output the best clustering the algorithm finds. We formally explain why we can assume that we have such a value of $z_\g$ in Appendix~\ref{sec:appendix-guess-z}.

Our algorithms consist of two steps. In the first step, which is common for both our algorithms, we construct a $(1-\gamma)$-restricted solution of $\cllp$, defined as follows.

\begin{definition}\label{def:restricted-sol}
A solution $(x,y)$ of $\cllp(\{w_j\}, \{\Delta_z(u)\}_{u\in P}, \lambda)$ is called {\em $(1-\gamma)$-restricted w.r.t. $P'\subseteq P$} if $y_u \geq 1 -\gamma$ for all $u\in P'$ and $y'_u = 0$ for all 
$u\notin P'$. 
\end{definition}
We find a $(1-\gamma)$-restricted solution by closely following the approach by~\cite{charikar2002constant}. 
However, as we work with a different objective function and LP,
some careful modifications to the approach by~\citeauthor{charikar2002constant} are required.
The second steps for our approximation and bicriteria approximation algorithms are different. The former uses an independent randomized rounding in combination with some techniques from \citeauthor{charikar2002constant}; the latter algorithm does not actually need any rounding -- it  simply outputs $C=P'$.

\section{Step 1: Constructing \texorpdfstring{$(1-\gamma)$}{(1-gamma)}-Restricted Solutions}\label{sec:step-1}
\subsection{Overview}
In this section, we present a polynomial-time algorithm that given an instance ${\cal I}$ with demand functions $\{w_j\}$ constructs an instance ${\cal I}'$ with new demands $\{w'_j\}$ on the same set of points $P$. Let $P'$ be the support of $\{w'_j\}$: $P' = \{u: w'_j(u)\neq 0 \text{ for some } j\in [\ell]\} $. The new instance will satisfy the following properties (which we now state informally).
\begin{enumerate} 
\item All points in $P'$ are well separated (the distance between every two points in $P'$ is ``large''). 
\item Let $(x,y)$ be an optimal LP solution for $\cllp(\{w_j\}, \{\Delta_{z_{\g}}(u)\}_{u\in P}, 2)$. The LP cost of $(x,y)$ w.r.t. new demands $\{w_j'\}$ is at most that w.r.t. the original demands
$\{w_j\}$.
\item  The cost of every solution $C$ w.r.t. original demands $\{w_j\}$ is at most a constant factor greater than the cost of $C$ w.r.t.
demands $\{w'_j\}$ (for every fixed $p$).
\end{enumerate}
After we show how to transform $\cal I$ to ${\cal I}'$ and prove that ${\cal I}'$ satisfies properties (1)-(3), we describe how to convert the optimal LP solution $(x,y)$ for $\cal I$ to a solution $(x',y')$ for ${\cal I}'$, which is $(1-\gamma)$-restricted w.r.t.~$P'$ (see Definition~\ref{def:restricted-sol}).

Now we observe that it is sufficient to design a ``good'' rounding scheme only for $(1-\gamma)$-restricted LP solutions: we use the transformations discussed above, then apply the rounding scheme to the LP solution $(x', y')$ for ${\cal I}'$, obtain a ``good'' solution $C$ for ${\cal I}'$, and then output $C$ as a solution for ${\cal I}$.

%Here is the main result of this section.
%We prove Theorem~\ref{thm:step1} in Section~\ref{sec:step1-theorem}. 
%%%The proofs of all claims and lemmas from Sections~\ref{sec:consolidating-locations} and~\ref{sec:consolidating-centers} appear in Appendix~\ref{sec:appendix-step-1}.

\subsection{Consolidating Locations}\label{sec:consolidating-locations}
In this subsection, we describe how we transform instance $\cal I$ to instance ${\cal I}'$.
We call this step location consolidation.
Let $(x,y)$ be an optimal LP solution for $\cllp(\{w_j\}, \{\Delta_{z_{\g}}(u)\}_{u\in P}, 2)$ where $z_{\g} \in [z^*, 2z^*]$ and $z^*$ is the cost for the optimal integral solution of the fair clustering instance.
To describe the location consolidation algorithm, we need to define a notion of ``fractional distance of point $u$ to the center'' according to $(x, y)$. 
We define the fractional distance $\R(u)$ for $u\in P$ as follows:
\begin{align}\label{eq:R-def}
\R(u) := \left(\sum_{v\in P} d(u, v)^p \cdot x_{u v}\right)^{1/p}
\end{align}
Note that if $(x,y)$ is an integral solution, then $\R(u)$ is simply the distance from $u$ to the center of the cluster $u$ is assigned to.

\begin{claim}\label{clm:total-R}
For each group $j\in [\ell]$, we have
$\sum_{u\in P} w_j(u) \cdot \R(u)^p \leq z^*$.
\end{claim}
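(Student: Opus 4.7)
The plan is to unfold the definition of $\R(u)$ and recognize the resulting expression as precisely the $j$-th term inside the $\max$ of the LP objective, then use LP optimality and the fact that the LP is a valid relaxation.

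Concretely, I would first substitute the definition of $\R(u)^p$ from~\eqref{eq:R-def} into the left-hand side, swap the order of summation, and use the convention that $w_j(u)=0$ for $u\notin P_j$ to obtain
\[
\sum_{u\in P} w_j(u)\cdot \R(u)^p \;=\; \sum_{u\in P_j,\,v\in P} w_j(u)\cdot d(u,v)^p\cdot x_{uv}.
\]
This is exactly the quantity whose maximum over $j\in[\ell]$ the LP $\cllp(\{w_j\},\{\Delta_{z_\g}(u)\}_{u\in P},2)$ is minimizing. Thus for each fixed $j$, this sum is bounded above by the LP objective value achieved by the optimal solution $(x,y)$.

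It remains to show that this LP value is at most $z^*$. For this I would invoke Claim~\ref{claim:up-bd-vaild} with $\lambda=2$ and $z=z_\g\geq z^*$, which guarantees that the integral solution obtained from the optimal set of centers $C^*$ (namely $y_v=\mathbf{1}[v\in C^*]$ and $x_{uv}=\mathbf{1}[v=\arg\min_{c\in C^*}d(u,c)]$) is a feasible LP solution: every constraint including the new family~\eqref{const:up-bd} is satisfied because $d(u,C^*)\leq 2\Delta_{z_\g}(u)$ for every $u$ with $w(u)>0$. This feasible integral solution has objective value exactly $z^*$, so the optimum of the LP is at most $z^*$, and the claim follows.

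There is no genuine obstacle here; the only thing to be careful about is making explicit that the strengthened constraint~\eqref{const:up-bd} does not disqualify the integral optimum, which is exactly what Claim~\ref{claim:up-bd-vaild} was set up to guarantee.
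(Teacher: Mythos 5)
Your proof is correct and follows essentially the same route as the paper: unfold $\R(u)^p$, recognize the $j$-th term of the LP objective, and bound it by $z^*$ using that $\cllp(\{w_j\},\{\Delta_{z_\g}(u)\},2)$ is a valid relaxation. You simply spell out the last step more explicitly by invoking Claim~\ref{claim:up-bd-vaild} to verify that the integral optimum is LP-feasible, which the paper leaves implicit after having established validity of the relaxation earlier in Section~\ref{sec:lp-gen}.
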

\begin{proof}
Consider an arbitrary group $j\in [\ell]$.
$$
\sum_{u\in P} w_j(u) \cdot \R(u)^p = \sum_{u,v\in P} w_j(u) \cdot  d(u, v)^p \cdot x_{u v} \leq z^*
$$
where the last inequality holds, since $(x,y)$ is an optimal solution for $\cllp(\{w_j\}, \{\Delta_{z_\g}(u)\}_{u\in P}, 2)$ and 
$\cllp(\{w_j\}, \{\Delta_{z_\g}(u)\}_{u\in P}, 2)$ is a relaxation for fair $\ell_p$ clustering.
\end{proof}

%%%%Note that the cost of a solution $(x^*, y^*)$ of the socially fair $\ell_p$-clustering instance on $\{P_i\}_{i\in [\ell]}$ and the $\ell_{p}$-norm objective is $\max_{j\in [\ell]} (z_j)^{1\over p}$ where for each $j\in [\ell]$, $z_j := \sum_{v\in P_j} \big({w(v) \over W_j}\big) \cdot \R_v $.

%Moreover, for each $v\in P$ and an integer {\em radius} $r$, we define $\sB_p(v,r)$ to denote the ball of radius $r$ around the point $v$; $\sB_{P}(v, r) = \{u\in P | d(v,u) \leq r\}$.

Algorithm~\ref{alg:consolidating_locations} performs location consolidation.   
\begin{algorithm}[h]
	\begin{algorithmic}[1]
		\STATE {\bfseries Input:} $(x, y)$ is an optimal solution of $\cllp(\{w_j\},  \{\Delta_{z_{\g}}(u)\}_{u\in P}, 2)$
		\STATE $\R(v) = \left(\sum_{u\in P}  d(v,u)^p \cdot x_{vu}\right)^{1/p}$ \textbf{for all} $v\in P$		
        \STATE $w'_j(v) = w_j(v)$ \textbf{for all} $v\in P$ and $j\in [\ell]$
		\STATE {\bf sort} the points in $P$ so that $\R(v_1) \leq \R(v_2) \leq \cdots \leq \R(v_n)$
		\FOR{$i = 1$ to $n-1$}
			\FOR{$j = i+1$ to $n$}
				\IF{$d(v_i, v_j) \leq \frac{2}{\gamma^{1/p}} \R(v_j)$  and $\sum_{c\in[\ell]}w'_c(v_i) > 0$}
					\STATE $w'_{t}(v_i) = w'_{t}(v_i) + w'_{t}(v_j)$ \textbf{for all} $t\in [\ell]$
                    \STATE $w'_{t}(v_j) = 0$  \textbf{for all} $t\in [\ell]$
				\ENDIF
			\ENDFOR
		\ENDFOR	
    \end{algorithmic}
	\caption{Consolidating locations.}
	\label{alg:consolidating_locations}
\end{algorithm}
After we initialize variables (lines 2--3), we sort all points in $P$ according to their fractional distance in a non-decreasing order -- $v_1, \cdots, v_n$ -- so that $\R(v_1) \leq \R(v_2) \leq \cdots \leq \R(v_n)$ (line 4).
Then, we consider the points in this order one-by-one. When processing a point $v_i$ with non-zero demand, we check whether there exists another point $v_j$ with non-zero demand such that $j>i$ and $d(v_i, v_j) \leq \frac{2}{\gamma^{1/p}} \R(v_j)$ (line 7). If there is such a point,  we add the demands of $v_j$ to $v_i$ and set the demands of $v_j$ to zero (lines 8-9). 
When the algorithm runs the described procedure,
we say that it \textit{moves} the demand of $v_j$ to $v_i$. Note that after the algorithm processes $v_i$, it never moves the demand of $v_i$ to another point. Thus, the demand of each point can be moved at most once.

We run Algorithm~\ref{alg:consolidating_locations} on optimal LP solution $(x,y)$ and obtain a new set of demands $\{w'_j\}$. Let $w'(v) = \sum_{j\in[\ell]} w'_j(v)$, and $P'$ be the support of $w'$.
Claim~\ref{clm:well-separated} shows that all points in $P'$ are well-separated.
\begin{claim}\label{clm:well-separated}
For every pair of $u,v \in P'$, $d(u,v) > \frac{2}{\gamma^{1/p}} \cdot \max(\R(u), \R(v))$.
\end{claim}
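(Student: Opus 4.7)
The plan is to prove the claim by contradiction: assume two distinct points $u,v \in P'$ with $d(u,v) \le \frac{2}{\gamma^{1/p}} \max(\R(u),\R(v))$, and show that one of them must end Algorithm~\ref{alg:consolidating_locations} with zero demand, contradicting membership in $P'$. Without loss of generality assume $\R(u) \le \R(v)$, so in the sorted order $u = v_i$ and $v = v_j$ with $i < j$. Then $\max(\R(u),\R(v)) = \R(v)$, so the hypothesis becomes $d(v_i, v_j) \le \frac{2}{\gamma^{1/p}}\R(v_j)$, which is exactly the distance condition tested on line~7 when the outer loop has index $i$ and the inner loop has index $j$.

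The bookkeeping step is to track how a point's demand $w'(v_k)$ can evolve. A careful reading of the algorithm shows that demand is only moved from an inner index $j'$ to an outer index $i' < j'$. Therefore the demand of $v_k$ can change only during the outer iterations with index at most $k$: during iterations with index $k' < k$ it can be decreased (when $v_k$ plays the role of the inner point), and during iteration $k$ itself it can be increased (when $v_k$ plays the role of the outer point). In particular, once outer iteration $k$ is complete, $w'(v_k)$ is fixed for the rest of the run, and $v_k$ never receives additional demand after iteration $k$.

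Using this, I would first argue that when outer iteration $i$ begins, $u = v_i$ already has strictly positive demand. Indeed, $u\in P'$ means its final demand is positive; but if its demand were zero at the start of iteration $i$, then the guard $\sum_c w'_c(v_i) > 0$ on line~7 would fail throughout that iteration, no demand would be transferred to $u$, and its final demand would be zero, a contradiction. Therefore, when the inner loop of iteration $i$ reaches index $j$, both parts of the line~7 test are satisfied: the distance condition holds by our contradiction assumption, and $u$ has positive demand. The algorithm then executes lines~8--9 and sets $w'_t(v_j) = 0$ for all $t\in[\ell]$.

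It remains to argue that this zero demand at $v$ is permanent. By the bookkeeping above, $v = v_j$ can only acquire new demand during outer iteration $j$, but that requires $v_j$ to have strictly positive demand at the start of iteration $j$; since iterations $i+1,\dots,j-1$ cannot increase $v_j$'s demand, it is still zero at the start of iteration $j$, so line~7's guard fails, no demand is added, and $v_j$ remains at zero demand at the end of the algorithm. Hence $v \notin P'$, contradicting the choice of $v$. The only real subtlety, which is the main point to get right, is the evolution-of-demand lemma, since line~7's guard tests only $v_i$ and not $v_j$; once that lemma is established, the three-step argument above closes the proof.
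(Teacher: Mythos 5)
Your proof is correct and follows the same contradiction argument as the paper: if the separation inequality failed, then when the outer loop reached $u=v_i$ the test on line~7 would fire at $j$, permanently zeroing out $v$'s demand and contradicting $v\in P'$. The bookkeeping lemma you establish (demand of $v_k$ can only decrease when $k$ is the inner index and can only increase during its own outer iteration, so once zeroed it stays zero) is what the paper states informally just before the claim as ``the demand of each point can be moved at most once''; you have simply made that step fully explicit.
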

\begin{proof}
The proof follows from the consolidation rule of the algorithm. Suppose that $u$ comes before $v$ in the ordering considered by Algorithm~\ref{alg:consolidating_locations}. Then, since both $u,v$ have non-zero demands and the algorithm has not moved the demand of $v$ to $u$ at the time it processed $u$, $d(u,v) > \frac{2}{\gamma^{1/p}} \R(v) = \frac{2}{\gamma^{1/p}} \max(\R(u),\R(v))$.
\end{proof}
Finally, we show that every solution $(x,y)$ for the original LP relaxation is also a feasible solution with the same or smaller cost for the LP with new demands $\{w_j'\}$.
\begin{lemma}\label{lem:consolidating_locations}
Let $(x, y)$ be a feasible solution for $\cllp(\{w_j\},\{\Delta_z(u)\}_{u\in P}, \lambda)$ with cost $z'$. Then $(x, y)$ is a feasible solution for $\cllp(\{w'_j\}, \{\Delta_z(u)\}_{u\in P},\lambda)$ with cost at most $z'$. 
\end{lemma}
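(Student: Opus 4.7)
The plan is to verify the two conclusions of the lemma---feasibility and the cost bound---separately. Throughout, I interpret the $(x, y)$ in the statement as the specific optimal solution of $\cllp(\{w_j\}, \{\Delta_{z_\g}(u)\}_{u\in P}, 2)$ that was fed into Algorithm~\ref{alg:consolidating_locations} to define $\{w'_j\}$, so that the quantity $\R$ used inside the algorithm coincides with the fractional distance of $(x,y)$ given by~(\ref{eq:R-def}).

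For feasibility, observe that constraints (\ref{cst:point-assignment}), (\ref{cst:center-count}), (\ref{cst:center-cap}), and (\ref{cst:positive}) do not depend on the demand values at all, so they carry over unchanged. The only demand-dependent constraint is (\ref{const:up-bd}). Algorithm~\ref{alg:consolidating_locations} only ever moves demand into a point $v_i$ that already has positive demand (via the test $\sum_{c\in[\ell]}w'_c(v_i)>0$ on line 7), so the support $P' = \{v : w'(v) > 0\}$ satisfies $P' \subseteq \{v : w(v) > 0\}$. Hence the family of constraints (\ref{const:up-bd}) in the new LP (quantified over $v$ with $w'(v)>0$) is a sub-family of the old one, and $(x,y)$ satisfies them automatically.

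For the cost bound, fix any group $j$. Writing $\R(u)^p = \sum_{v\in P} d(u,v)^p \, x_{uv}$ as in~(\ref{eq:R-def}), the $j$-th coordinate of the LP cost under weights $w_j$ (resp.\ $w'_j$) is $\sum_u w_j(u)\,\R(u)^p$ (resp.\ $\sum_u w'_j(u)\,\R(u)^p$). It suffices to show the latter is at most the former for each $j$; taking the maximum over $j$ then gives total cost at most $z'$. The new weights $\{w'_j\}$ are obtained from $\{w_j\}$ by a sequence of \emph{demand moves}: one move transfers $w_j(v_j)$ from $v_j$ to $v_i$ with $i<j$ in the sorted order of the algorithm. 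Such a move changes $\sum_u w_j(u)\,\R(u)^p$ by exactly $w_j(v_j)\cdot\bigl(\R(v_i)^p - \R(v_j)^p\bigr)$, which is non-positive because $\R(v_i)\le\R(v_j)$ by the sorting. Since every point's demand is moved at most once (once $v_i$ is processed in the outer loop, it is never a source for a later move), the total change is the sum of these non-positive per-move changes, giving $\sum_u w'_j(u)\R(u)^p \le \sum_u w_j(u)\R(u)^p$.

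There is no real technical obstacle; the proof is essentially bookkeeping combined with a monotonicity observation. The one point that requires care is verifying that every demand move is indeed ``downhill'' in the $\R$-ordering and that each point's demand is moved at most once, both of which follow immediately from the sorted processing order of Algorithm~\ref{alg:consolidating_locations} and from the fact noted in the text that after a point is processed, it is never considered again as a source.
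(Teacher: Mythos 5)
Your proof is correct and follows essentially the same route as the paper: feasibility is argued by noting that the support of $w'$ is contained in the support of $w$ (so the constraint family~(\ref{const:up-bd}) only shrinks), and the cost bound follows because every demand move is ``downhill'' in $\R$ (from $v_j$ to $v_i$ with $\R(v_i)\le \R(v_j)$), so the quantity $\sum_u w'_j(u)\,\R(u)^p$ is monotone non-increasing over the execution of Algorithm~\ref{alg:consolidating_locations}. One small notational wrinkle: you fix a group index $j$ and then also write $v_j$ for a point, and you describe the transferred amount as $w_j(v_j)$ rather than the current $w'_j(v_j)$ at the time of the move; neither affects the argument since the transferred amount is non-negative in every case, but tidying the indexing would make the per-move accounting cleaner.
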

\begin{proof}
In relaxations $\cllp(\{w_j\}, \{\Delta_z(u)\}_{u\in P}, \lambda)$ and $\cllp(\{w'_j\}, \{\Delta_z(u)\}_{u\in P}, \lambda)$, all constraints other than \eqref{const:up-bd} do not depend on demands and thus are the same in both relaxations.
Observe that if $w(v) = 0$ then also $w'(v) = 0$. Therefore, if constraint \eqref{const:up-bd} is present in linear program $\cllp(\{w'_j\}, \{\Delta_z(u)\}_{u\in P}, \lambda)$ for some $u$ and $v$, then it is also 
present in $\cllp(\{w_j\}, \{\Delta_z(u)\}_{u\in P}, \lambda)$ for the same $u$ and $v$ (note that we use the same $\Delta_z(u)$ in both LPs). We conclude that the set of constraints of $\cllp(\{w'_j\}, \{\Delta_z(u)\}_{u\in P}, \lambda)$ is a subset of those of $\cllp(\{w_j\}, \{\Delta_z(u)\}_{u\in P}, \lambda)$.
Thus, since $(x,y)$ is a feasible solution for $\cllp(\{w_j\}, \{\Delta_z(u)\}_{u\in P}, \lambda)$, it is also a feasible solution for $\cllp(\{w'_j\}, \{\Delta_z(u)\}_{u\in P}, \lambda)$.

Now we show that the LP cost of every group $j$ does not increase. The LP cost of group $j$ is $\sum_{u \in P} w'_j(u) \R(u)$. When Algorithm~\ref{alg:consolidating_locations} initializes $w_j'$ in the very beginning (see line 3 of the algorithm),
we have $w'_j(u) = w_j(u)$ for all $u$, and thus at that point $\sum_{u \in P} w'_j(u) \R(u) = \sum_{u \in P} w_j(u) \R(u) \leq z'$.
When the algorithm moves demand from $u$ to $v$ on lines 8-9, we always have $R(v) \leq R(u)$.
Thus, every time lines 8-9 are executed, the value of expression $\sum_{u \in P} w'_j(u) \R(u)$ may only go down.
Therefore, when the cost of LP solution $(x,y)$ with respect to demands $w'_j$ returned by the algorithm is at most $z'$.
\end{proof}

\subsection{Consolidating Centers}\label{sec:consolidating-centers}
In the previous section, we constructed an instance with a well-separated set of points $P'$ that have positive demands $\{w_j'\}$. In this section, we 
simplify the structure of  the set of ``opened centers'' in the LP solution -- points $u\in P$ with $y_u > 0$. 
Note that $y_u$ may be positive even if $u \notin P'$. We will transform the LP solution $(x,y)$ and obtain a new solution $(x',y')$
with approximately the same LP cost such that $y'_u > 0$ only if $u\in P'$. We will see then that $(x', y')$ is a $(1-\gamma)$-restricted solution of $\cllp$.

Our approach is identical to that in~\cite{charikar2002constant}. If $v \in P\setminus P'$ and $y_v > 0$, we 
move center $v$ to the closest to $v$ point $v'$ in $P'$ by letting $y'_{v'} = y'_{v'} + y_v$. Then we close the center $v$ by letting $y'_v = 0$.
If $v \in P'$, we keep the center at $v$. See Algorithm~\ref{alg:consolidating_centers} for the formal description of this procedure.

\begin{algorithm}[h]
	\begin{algorithmic}[1]
		\STATE {\bfseries Input:} $P', x, y$
		\STATE $x' = x$, $y' = y$
		\FORALL{$v \in P \setminus P'$ with $y'_v >0$}
			\STATE let $v'$ be a closest to $v$ point in $P'$.
			\STATE $y'_{v'} = \min(1, y'_{v'} + y'_v)$, $y'_v = 0$
			\STATE $x'_{u v'} = x'_{u v'} + x'_{uv}$ \text{ and } $x'_{uv} = 0$ \textbf{for all} {$u\in P'$}
		\ENDFOR
	\end{algorithmic}
	\caption{Consolidating centers.}
	\label{alg:consolidating_centers}
\end{algorithm}

Consider a point $u\in P'$ that is fractionally served by center $v\notin P'$ in LP solution $(x,y)$; that is, $x_{uv} > 0$ and $v\notin P$. In the new solution $(x',y')$, it is served by center $v'$. We show that the distance from $u$ to the new center is greater than that to the old one by at most a factor of $2$.

\begin{claim}\label{clm:nn-increase}
Consider a point $v\in P\setminus P'$ and let $v'$ be the nearest neighbor of $v$ in the set $P'$. Then, for every $u\in P'$, $d(u,v') \leq 2 d(u,v)$.
\end{claim}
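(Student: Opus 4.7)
The plan is to prove this by a direct application of the triangle inequality together with the nearest-neighbor property of $v'$. Note that the claim is stated in terms of the underlying metric $d$ (not the $p$-th power $d^p$), so we have access to the ordinary triangle inequality rather than the $\alpha_p$-approximate one from Claim~\ref{clm:tri-ineq}.

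First, I would write the triangle inequality
$$
d(u, v') \leq d(u, v) + d(v, v').
$$
Next, because $u$ lies in $P'$ and $v'$ is by construction a \emph{closest} point in $P'$ to $v$, we obtain
$$
d(v, v') \leq d(v, u) = d(u, v).
$$
Combining these two inequalities yields $d(u, v') \leq 2 d(u, v)$, as required.

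There is essentially no obstacle here: the argument is a two-line calculation, and the only subtlety is noting that $u \in P'$ is a valid candidate for the nearest-neighbor minimization that defines $v'$, which is precisely the hypothesis of the claim. The same lemma appears in~\cite{charikar2002constant} for the $k$-medians setting, and it transfers verbatim to our setting because Algorithm~\ref{alg:consolidating_centers} uses the underlying metric $d$ (independent of the exponent $p$) to choose the target center $v'$.
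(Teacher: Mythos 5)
Your proof is correct and is essentially the same two-step argument the paper gives: apply the ordinary triangle inequality $d(u,v') \le d(u,v) + d(v,v')$ and then use that $v'$ is the nearest point of $P'$ to $v$ while $u \in P'$, so $d(v,v') \le d(v,u)$. No meaningful differences.
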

\begin{proof}
Since (i) $v'$ is a closest point in $P'$ to $v$ and (ii) $u\in P'$, we have $d(v, v') \leq d(v, u)$. Applying the triangle inequality,
we get
$$
d(u,v')
\leq d(u,v) + d(v, v')
\leq 2 d(u,v).
$$
\end{proof}

\begin{claim}\label{clm:local-ball-density}
For each $u\in P'$, $\sum_{v\in \sB(u, r_u)} x_{uv} \geq 1 - \gamma$ where $r_u = \frac{\R(u)}{\gamma^{1/p}}$.
\end{claim}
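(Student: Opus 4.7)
The plan is to prove this via a direct Markov-style inequality on the definition of $\R(u)$. Recall from equation~\eqref{eq:R-def} that
\[
\R(u)^p \;=\; \sum_{v\in P} d(u,v)^p \, x_{uv},
\]
and from constraint~\eqref{cst:point-assignment} that $\sum_{v\in P} x_{uv} = 1$ for every $u \in P$, in particular for $u \in P'$. So the quantities $\{x_{uv}\}_{v\in P}$ behave like a probability distribution over $v$, and $\R(u)^p$ is the $p$-th moment of $d(u,\cdot)$ under this distribution.

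First, I would split the sum defining $\R(u)^p$ into contributions from centers inside and outside the ball $\sB(u, r_u)$ where $r_u = \R(u)/\gamma^{1/p}$:
\[
\R(u)^p \;\geq\; \sum_{v:\, d(u,v) > r_u} d(u,v)^p \, x_{uv} \;\geq\; r_u^p \sum_{v:\, d(u,v) > r_u} x_{uv}.
\]
Dividing by $r_u^p = \R(u)^p / \gamma$ yields
\[
\sum_{v:\, d(u,v) > r_u} x_{uv} \;\leq\; \gamma.
\]

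Finally, I would combine this with $\sum_{v\in P} x_{uv} = 1$ to conclude that
\[
\sum_{v \in \sB(u, r_u)} x_{uv} \;=\; 1 \;-\; \sum_{v:\, d(u,v) > r_u} x_{uv} \;\geq\; 1 - \gamma,
\]
which is the desired inequality.

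There is essentially no obstacle here—this is a one-line Markov inequality applied to the LP variables, and it does not even require $u \in P'$ specifically (it holds for every $u \in P$ with $\R(u)$ defined). The hypothesis $u \in P'$ is presumably stated only because this claim is used downstream in conjunction with the well-separated property of $P'$ from Claim~\ref{clm:well-separated}.
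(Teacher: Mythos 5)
Your proof is correct and is essentially the same Markov-style argument the paper uses: the paper bounds $\sum_{v\notin \sB(u,r_u)} x_{uv}$ by $\sum_{v} x_{uv}\, d(u,v)^p / r_u^p = \R(u)^p/r_u^p = \gamma$, which is just a rephrasing of your step. Your side remark that the hypothesis $u\in P'$ is not actually needed here is also accurate.
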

\begin{proof}
$$\sum_{v\notin \sB(u, r_u)} x_{uv} \leq \sum_{v\notin \sB(u, r_v)} x_{uv} \frac{d(u,v)^p}{r_u^p} \leq \frac{1}{r_u^p} \sum_{v\in P} x_{uv} d(u,v)^p = \frac{\R_u^p}{r_u^p} = \gamma.$$
Thus, $\sum_{v\in \sB(u, r_u)} x_{uv} \geq \sum_{v\in P} x_{uv} - \gamma = 1 - \gamma$.
\end{proof}

\begin{lemma}\label{lem:consolidating_centers}
Algorithm~\ref{alg:consolidating_centers}, given a solution $(x,y)$ for $\cllp(\{w'_j\}, \{\Delta_z(u)\}_{u\in P},\lambda)$ of cost $z'$, constructs a $(1-\gamma)$-restricted w.r.t. $P'$ solution $(x',y')$ for $\cllp(\{w'_j\}, \{\Delta_z(u)\}_{u\in P}, 2\lambda)$ of cost at most $2^p \, z'$.
\end{lemma}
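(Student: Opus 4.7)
The plan is to verify three properties of the output $(x', y')$: (i) it is $(1-\gamma)$-restricted w.r.t.~$P'$; (ii) it is feasible for $\cllp(\{w'_j\},\{\Delta_z(u)\}_{u\in P},2\lambda)$; and (iii) its LP cost is at most $2^p z'$. I plan to handle them in this order.

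For the restricted property, the equality $y'_u = 0$ for $u \notin P'$ is immediate from line~5 of Algorithm~\ref{alg:consolidating_centers}. For a fixed $u \in P'$, I set $r_u = \R(u)/\gamma^{1/p}$ and argue geometrically. By Claim~\ref{clm:well-separated}, every other $w \in P'$ satisfies $d(u,w) > 2 r_u$, so $\sB(u,r_u) \cap P' = \{u\}$. Moreover, for any $v \in \sB(u,r_u) \cap (P \setminus P')$ and any $w \in P' \setminus \{u\}$, the triangle inequality together with Claim~\ref{clm:well-separated} forces $d(v,w) > d(v,u)$ (otherwise $d(u,w) \leq 2 d(u,v) \leq 2 r_u$, contradicting the separation bound). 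Hence $u$ is the unique nearest $P'$-point of every such $v$, so Algorithm~\ref{alg:consolidating_centers} deposits $y_v \geq x_{uv}$ into $y'_u$. Combining this with $y_u \geq x_{uu}$ and Claim~\ref{clm:local-ball-density} yields $y'_u \geq \min\bigl(1,\sum_{v \in \sB(u,r_u)} x_{uv}\bigr) \geq 1-\gamma$, and the cap at $1$ does not affect the lower bound.

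For feasibility, constraints (\ref{cst:point-assignment}), (\ref{cst:center-count}), (\ref{cst:positive}), and (\ref{cst:center-cap}) are preserved by routine bookkeeping: capping only weakens the total $y$-mass, $x$-mass is merely redistributed toward $P'$-centers, and each transfer maintains $x'_{u v'} \leq y'_{v'}$ because $x'_{u v'}$ is bounded by both $1$ and by the pre-cap sum $y_{v'} + \sum y_v$. The only interesting check is constraint (\ref{const:up-bd}) with the stretched parameter $2\lambda$: if $x'_{vu} > 0$ for some $v \in P'$, then either this edge was already present in $(x,y)$, giving $d(u,v) \leq \lambda \Delta_z(v)$, or it was rerouted from some $u' \in P \setminus P'$ with $x_{v u'} > 0$; in the latter case, Claim~\ref{clm:nn-increase} (applied with the $P'$-point $v$ playing the role of ``$u$'' in the claim) yields $d(u,v) \leq 2 d(u',v) \leq 2\lambda \Delta_z(v)$.

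For the cost bound, the key observation is that $w'_j(u) > 0$ forces $u \in P'$, so Claim~\ref{clm:nn-increase} applies to every rerouted arc and inflates its length by a factor of at most $2$. Raising to the $p$-th power and summing over $j$-weighted pairs gives $\sum_{u,v} w'_j(u) d(u,v)^p x'_{uv} \leq 2^p \sum_{u,v} w'_j(u) d(u,v)^p x_{uv} \leq 2^p z'$ for every group $j$, so the LP cost of $(x', y')$ is at most $2^p z'$. The main obstacle I expect is the uniqueness-of-nearest-$P'$-point argument: one must carefully combine Claims~\ref{clm:well-separated} and~\ref{clm:local-ball-density} to force every $v \in \sB(u, r_u)$ to donate its $y$-mass back to $u$. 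Once that geometric fact is in hand, the remaining bookkeeping is essentially algebraic.
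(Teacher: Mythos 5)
Your proof is correct and takes essentially the same approach as the paper's. Both establish the $(1-\gamma)$-restricted property by combining Claims~\ref{clm:well-separated} and~\ref{clm:local-ball-density} to show that each $v\in\sB(u,r_u)$ has $u$ as its unique nearest $P'$-point (you argue by contradiction where the paper gives a direct inequality chain, but the underlying facts are identical), verify constraint~\eqref{const:up-bd} at parameter $2\lambda$ via Claim~\ref{clm:nn-increase}, and bound the cost blow-up by $2^p$ using that same claim.
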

\begin{proof}
Consider solutions $(x, y)$ and $(x', y')$. Let $u\in P'$. From Claim~\ref{clm:local-ball-density} and the inequality $x_{uv} \leq y_v$ (for all $v\in P$), we get that  
$$\sum_{v\in \sB(u, r_u)} y_v \geq 1 - \gamma \qquad \text{where }  r_u = \frac{\R(u)}{\gamma^{1/p}}.$$ 

We now show that for each $v\in \sB(u, r_u)$, point $u$ is the closest neighbor of $v$ in $P'$ and, therefore, Algorithm~\ref{alg:consolidating_centers} reassigns $y_v$ to point $u$. Let $\tilde u$ be a point in $P'$ other than $u$.
Then,
\begin{align*}
d(\tilde u, v)
&\geq d(u,\tilde u) - d(u,v) &&\rhd\text{by the triangle inequality} \\
&> \frac{2}{\gamma^{1/p}} \cdot \R(u) - d(u,v) &&\rhd\text{from Claim~\ref{clm:well-separated}, since } u,\tilde u \in P'  \\
&\geq 2r_u - r_u = r_u \geq d(u, v)&&\rhd\text{since $v\in \sB(u, r_u)$} 
\end{align*}
We conclude that Algorithm~\ref{alg:consolidating_centers} assigns at least $\sum_{\sB(u, r_u)} y_v \geq 1 - \gamma$ to $y_u'$.
We have, $y'_u \geq 1 - \gamma$ for all $u\in P'$ and $y'_u = 0$ for $u\notin P'$. Therefore, $(x',y')$ is a $(1-\gamma)$-restricted solution.

Now we upper bound the cost of $(x', y')$. By Claim~\ref{clm:nn-increase}, when Algorithm~\ref{alg:consolidating_centers} moves a center from point $v\notin P'$ to $v'\in P'$,
the connection cost to $v$ increases by at most a factor of $2^p$: $d(u,v')^p \leq 2^p d(u,v)$ for $u\in P'$. Thus, the cost of $(x',y')$ is at most $2^p\, z'$.

Lastly, we show that $(x', y')$ is a feasible solution of $\cllp(\{w'_j\}, \{\Delta_z(u)\}_{u\in P}, 2 \lambda)$.
Since in each iteration, the value of $\sum_{v\in P} y'_v$ does not increase and the value of $\sum_{v\in P}x'_{uv}$ (for all $u\in P$) does not change, solution $(x',y')$ satisfies constraints~\eqref{cst:point-assignment} and~\eqref{cst:center-count}. 
Also, every time we close a center $v\notin P'$ (see lines 4-6 of Algorithm~\ref{alg:consolidating_centers}),
we increase $x'_{uv'}$ by $x_{uv} \leq y_v$ and increase $y'_{v'}$ by $y_v$ (unless doing so would make $y_v > 1$; in this case, we let $y_v' = 1$). Therefore,
constraint~\eqref{cst:center-cap} remains satisfied throughout the execution of the algorithm.

It now remains to show that $(x',y')$ satisfies constraint~\eqref{const:up-bd} of $\cllp(\{w'_j\}, \{\Delta_z(u)\}_{u\in P}, 2\lambda)$. Consider a pair $u,v'\in P'$ with $x'_{uv'} > 0$. First, assume that $x_{uv'}>0$. Note that $(x,y)$ is a feasible solution of $\cllp(\{w'_j\}, \{\Delta_z(u)\}_{u\in P}, \lambda)$ and thus satisfies constraint~\eqref{const:up-bd}. Therefore, $d(u,v') \leq \lambda \Delta_{z}(u)$, and we are done. 
Now assume that $x_{uv'}=0$. This means that we have closed a fractional center $v\in P\setminus P'$ for $u$ and reassigned $u$ from center $v$ to center $v'$. Then $x_{uv}>0$ and $v'$ is the closest to $v$ point in $P'$. Hence, 
$$
    d(u,v') \leq 2 d(u,v) \leq 2 \lambda \Delta_z(u) 
$$
where the first inequality holds by Claim~\ref{clm:nn-increase}, and the second  inequality holds since $(x,y)$ is a feasible solution for $\cllp(\{w'_j\}, \{\Delta_z(u)\}_{u\in P}, \lambda)$.
We conclude that solution $(x',y')$ is a feasible solution for $\cllp(\{w'_j\}, \{\Delta_z(u)\}_{u\in P}, 2 \lambda)$.
\end{proof} 

We run Algorithm~\ref{alg:consolidating_centers} on the optimal solution
$(x,y)$ for $\cllp(\{w_j\}, \{\Delta_{z_{\g}}(u)\}_{u\in P}, 2)$ and obtain a feasible $(1-\gamma)$-restricted solution $(x',y')$ for $\cllp(\{w_j\}, \{\Delta_{z_{\g}}(u)\}_{u\in P}, 4)$. Since $\cllp(\{w_j\}, \{\Delta_{z_{\g}}(u)\}_{u\in P}, 2)$ is a relaxation for the fair $\ell_p$-clustering,
the cost of $(x,y)$ w.r.t. demands $\{w_j\}$ is at most $z^*$.
By Lemma~\ref{lem:consolidating_locations}, the cost of $(x,y)$ w.r.t 
$\{w_j'\}$ is also at most $z^*$. Finally, by Lemma~\ref{lem:consolidating_centers}, the cost of $(x',y')$
is at most $2^p z^*$

\subsection{Relating the solution costs w.r.t. \texorpdfstring{${\cal I}'$}{I'} and \texorpdfstring{${\cal I}$}{I}}\label{sec:step1-theorem}
In the previous sections, we transformed given instance $\cal I$
to instance ${\cal I}'$ with demands $\{w'_j\}$ and well-separated locations $P'$, then
converted an optimal LP solution $(x,y)$ to a $(1-\gamma)$-restricted solution $(x',y')$.
Now we upper bound the cost of every solution $C$ w.r.t. original demands $\{w_j\}$ in terms of the cost w.r.t. demands $\{w_j'\}$.
\begin{lemma} \label{lem:step1} 
Let $z^*$ be the cost of the optimal solution for $\cal I$. Assume that $z_\g\in [z^*, 2z^*]$. 
%Then we have the following.
%\begin{enumerate}
%    \item There is a $(1-\gamma)$-restricted solution  (w.r.t. $P'$) of $\cllp(\{w'_j\}, z_\g, 4)$ of cost at most $2^p z^*$,
 For any integral solution $C'$ with $\faircost(C', \{w'_j\}) \leq z'$,
 $$\faircost(C', \{w_j\}) \leq \frac{2^{2p - 1}}{\gamma} \cdot z^* + 2^{p-1} \cdot z'.$$
%\end{enumerate}
\end{lemma}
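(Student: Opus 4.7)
The plan is to track, for each point $u$ with positive original demand, where its demand was sent by Algorithm~\ref{alg:consolidating_locations}. Define a map $\phi : \mathrm{supp}(w) \to P'$ by $\phi(u) = u$ if the algorithm never moves the demand at $u$, and $\phi(u) = v_i$ if the algorithm moves the demand of $u = v_j$ onto some earlier point $v_i \in P'$ (each point's demand is moved at most once, so $\phi$ is well-defined). By construction of the algorithm, $w'_j(v) = \sum_{u : \phi(u) = v} w_j(u)$ for every $v \in P'$ and every $j\in[\ell]$, and crucially $d(u,\phi(u)) \leq \tfrac{2}{\gamma^{1/p}} \R(u)$ for every $u \in \mathrm{supp}(w)$ (for $\phi(u) = u$ this is trivial; otherwise this is exactly the consolidation rule on line 7).

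Next, I would fix an arbitrary group $j$ and apply the $2^{p-1}$-approximate triangle inequality from Claim~\ref{clm:tri-ineq} to split the cost of $C'$ with respect to $w_j$:
\begin{align*}
\sum_{u \in P} w_j(u)\, d(u, C')^p
&\leq 2^{p-1} \sum_{u \in P} w_j(u)\, d(u,\phi(u))^p + 2^{p-1} \sum_{u \in P} w_j(u)\, d(\phi(u), C')^p.
\end{align*}
The second sum telescopes into a $w'_j$-cost: swapping the order of summation and grouping points by their image gives $\sum_{u} w_j(u)\, d(\phi(u), C')^p = \sum_{v \in P'} w'_j(v)\, d(v, C')^p \leq \faircost(C', \{w'_j\}) \leq z'$. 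For the first sum, the consolidation distance bound yields $d(u,\phi(u))^p \leq \tfrac{2^p}{\gamma}\, \R(u)^p$, so it is at most $\tfrac{2^p}{\gamma} \sum_u w_j(u)\, \R(u)^p \leq \tfrac{2^p}{\gamma} z^*$ by Claim~\ref{clm:total-R} (which is stated for $\cllp$ at $\lambda = 2$ with budget $z_\g$, valid because $z_\g \in [z^*, 2z^*]$ makes the LP a relaxation by Claim~\ref{claim:up-bd-vaild}). Combining the two estimates gives
$$\sum_{u \in P} w_j(u)\, d(u, C')^p \;\leq\; \frac{2^{2p-1}}{\gamma}\, z^* + 2^{p-1}\, z',$$
and taking the maximum over $j$ finishes the proof.

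There is no real obstacle here; the only subtlety is making sure the map $\phi$ is well-defined and yields exactly $w'_j$ when pushed forward, which requires a careful but straightforward bookkeeping argument over the execution of Algorithm~\ref{alg:consolidating_locations}, together with the observation that the distance bound $d(u, \phi(u)) \leq \tfrac{2}{\gamma^{1/p}} \R(u)$ uses $\R(u)$ (the $\R$-value of the point being \emph{moved}) rather than $\R(\phi(u))$. Everything else is a direct application of the approximate triangle inequality and Claim~\ref{clm:total-R}; note that the factor of $z^*$ in the bound (rather than $z_\g$) comes from using the optimal LP solution, whose cost is at most $z^*$ since $\cllp(\{w_j\},\{\Delta_{z_\g}(u)\},2)$ is a relaxation.
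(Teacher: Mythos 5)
Your proof is correct and takes essentially the same route as the paper: the map $\phi$ plays exactly the role of the paper's $u \mapsto u'$, the approximate triangle inequality from Claim~\ref{clm:tri-ineq} is applied in the same way, and the first and second sums are bounded via Claim~\ref{clm:total-R} and the pushforward identity $\sum_u w_j(u)\, d(\phi(u),C')^p = \sum_{v\in P'} w'_j(v)\, d(v,C')^p$, respectively. The only (cosmetic) difference is that you make the bookkeeping behind the pushforward slightly more explicit than the paper does.
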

\begin{proof}
%{\bf of Theorem~\ref{thm:step1}} 
\iffalse
We prove item 1. 
Recall that $(x,y)$ is an optimal LP solution for $\cllp(\{w_j\}, \{\Delta_{z_{\g}}(u)\}_{u\in P}, 2)$. Since $z_\g \geq z^*$,
$\cllp(\{w_j\}, \{\Delta_{z_{\g}}(u)\}_{u\in P}, 2)$ is a valid relaxation for the problem and thus the LP cost of $(x,y)$ is at most $z^*$. By Lemma~\ref{lem:consolidating_locations}, the LP cost of $(x,y)$ w.r.t.~demands $\{w'_j\}$ is also at most $z^*$.
By Lemma~\ref{lem:consolidating_centers}, $(x',y')$ a $(1-\gamma)$-restricted (w.r.t.~$P'$) feasible solution of $\cllp(\{w'_j\}, \{\Delta_{z_{\g}}(u)\}_{u\in P}, 4)$  of cost at most $2^p z^*$.
\fi
%We now prove item 2.
Consider the execution of Algorithm~\ref{alg:consolidating_locations}. The algorithm may either move the demand of point $u$ to some other point $v$ or keep it at $u$.
Let $u' = v$ in the former case and $u' = u$ in the latter case (for every $u\in P$). Note that in either case $d(u, u') \leq \frac{2}{\gamma^{1/p}} \R(u)$.
Therefore,
\begin{align}
d(u, C')^p
&\leq \alpha_p \cdot (d(u, u')^p + d(u', C')^p) &&\rhd\text{Claim~\ref{clm:tri-ineq} (approximate triangle inequality)}\nonumber \\
&\leq \alpha_p \cdot \left(\frac{2^p}{\gamma} \cdot \R(u)^p + d(u', C')^p\right) &&\rhd\text{we upper bound } d(u,u')\label{eq:approximate-triag-ineq-app}
\end{align}
where $\alpha_p = 2^{p-1}$. Hence, for each group $j$,
\begin{align}
\sum_{u\in P} w_j(u) \cdot d(u, C')^p
&\leq \sum_{u\in P} w_j(u) \cdot \left(\frac{2^{2p-1}}{\gamma} \cdot \R(u)^p + 2^{p-1}\cdot d(u', C')^p\right) &&\hspace*{-5mm}\rhd\text{by (\ref{eq:approximate-triag-ineq-app})} \nonumber \\
&\leq \frac{2^{2p-1}}{\gamma} \cdot z^* + 2^{p-1} \cdot \sum_{u\in P} w_j(u) \cdot d(u', C')^p && \hspace*{-5mm}\rhd\text{Claim~\ref{clm:total-R}} \nonumber \\
&= \frac{2^{2p-1}}{\gamma} \cdot z^* + 2^{p-1} \cdot \sum_{v\in P'} w'_j(v) \cdot d(v, C')^p  \nonumber 
\leq \frac{2^{2p-1}}{\gamma} \cdot z^* + 2^{p-1} \cdot z' \nonumber.
\end{align}
It follows that $\faircost(C', \{w_j\}) \leq \frac{2^{2p-1}}{\gamma} \cdot z^* + 2^{p-1} \cdot z'$.
\end{proof}
\section{Step 2: Rounding \texorpdfstring{$(1-\gamma)$}{1-gamma}-Restricted Solutions}\label{sec:rounding}

\subsection{Randomized Rounding for Multiplicative Approximation}
Before describing our randomized rounding procedure, we prove several lemmata, which we will use in the analysis of the algorithm.
\begin{lemma}\label{lem:cost-bound}
Consider an instance of socially fair clustering. Let $\{w'_j\}$ be the set of weights computed by Algorithm~\ref{alg:consolidating_locations} and $(x',y')$ be a feasible solution for $\cllp(\{w'_j\}, z_{\g}, 4)$ computed by Algorithm~\ref{alg:consolidating_centers}. Consider some group $j$ and some point $v\in P'$ with $x'_{vv} < 1$. 
Let $v'$ be a point closest to $v$ in $P'$ other than $v$ itself. Then,
\begin{align*}
w'_{j}(v) \cdot d(v, v')^p \leq \left(2 \cdot 4^p + \frac{8^p}{\gamma}\right)  z^*,
\end{align*}
where $z^*$ is the cost of the optimal solution and $z_\g\in[z^*, 2z^*]$.
\end{lemma}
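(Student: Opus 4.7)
The key quantity is $r := d(v, v')$, which I will first bound by $4\Delta_{z_\g}(v)$. Since $\sum_{u \in P} x'_{vu} = 1$ but $x'_{vv} < 1$, some $u \neq v$ has $x'_{vu} > 0$. The $(1-\gamma)$-restrictedness of $(x', y')$ together with the LP inequality $x'_{vu} \leq y'_u$ forces $u \in P'$. Now $v \in P'$ means $w'(v) > 0$, so constraint~\eqref{const:up-bd} of $\cllp(\{w'_j\}, \{\Delta_{z_\g}(u)\}_{u\in P}, 4)$ applies to the pair $(v, u)$, giving $d(u, v) \leq 4\Delta_{z_\g}(v)$. Because $v'$ is the point of $P' \setminus \{v\}$ closest to $v$, $r = d(v, v') \leq d(v, u) \leq 4\Delta_{z_\g}(v)$. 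Write $\rho := \Delta_{z_\g}(v)$, so $r \leq 4\rho$.

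Let $S_v \subseteq P$ consist of $v$ and all points whose demand Algorithm~\ref{alg:consolidating_locations} moved to $v$; by construction $w'_j(v) = \sum_{u \in S_v} w_j(u)$, and for every $u \in S_v \setminus \{v\}$ the consolidation rule guarantees $d(u, v) \leq \tfrac{2}{\gamma^{1/p}}\,\R(u)$. The plan is to split
\begin{equation*}
w'_j(v) \cdot r^p \;=\; \sum_{\substack{u \in S_v \\ d(u,v) < \rho}} w_j(u)\cdot r^p \;+\; \sum_{\substack{u \in S_v \\ d(u,v) \geq \rho}} w_j(u)\cdot r^p,
\end{equation*}
and bound each piece separately.

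For the near sum I use the defining property of $\Delta_{z_\g}(v)$: $\vol_v(s) < z_\g$ for every $s < \rho$. Since $\vol_v(s) \geq \sum_{u \in \sB(v,s)} w_j(u)\, s^p$, monotone convergence as $s \to \rho^-$ yields $\sum_{u:\, d(u,v) < \rho} w_j(u)\cdot \rho^p \leq z_\g \leq 2 z^*$. Multiplying by $(r/\rho)^p \leq 4^p$ bounds the near sum by $2 \cdot 4^p\, z^*$. For the far sum, the inequality $d(u,v) \geq \rho$ combined with $d(u,v) \leq \tfrac{2}{\gamma^{1/p}} \R(u)$ gives $\R(u)^p \geq \gamma\rho^p/2^p$, and Claim~\ref{clm:total-R} then yields $\sum_{\mathrm{far}} w_j(u) \leq 2^p z^{*} / (\gamma \rho^p)$. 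Multiplying by $r^p \leq 4^p \rho^p$ bounds the far sum by $8^p z^{*} / \gamma$. Summing the two contributions gives the claim.

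The main obstacle is choosing the splitting threshold: splitting at $\rho$ (rather than at, say, $r/2$, which a naive triangle-inequality argument would suggest) is what allows me to invoke the defining property of $\Delta_{z_\g}$ on the near side and simultaneously exploit $\R(u) \gtrsim \gamma^{1/p}\rho$ on the far side. The only subtle point is the strict inequality $s < \rho$, which must be handled via left-continuity of $\vol_v$ so that the near contribution is bounded by $z_\g$ instead of by the potentially much larger $\vol_v(\rho)$.
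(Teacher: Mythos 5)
Your proof is correct and follows essentially the same route as the paper's: both split the consolidated demand mass at the threshold $\rho = \Delta_{z_\g}(v)$ into a near part bounded via $\vol_v(\rho - 0) \leq z_\g$ and a far part bounded via the movement rule $d(u,v) \leq \tfrac{2}{\gamma^{1/p}}\R(u)$ together with Claim~\ref{clm:total-R}, after first showing $d(v,v') \leq 4\Delta_{z_\g}(v)$ from constraint~\eqref{const:up-bd}. The only cosmetic difference is that you spell out the left-limit argument and the reason $u \in P'$ (via $(1-\gamma)$-restrictedness and $x'_{vu} \leq y'_u$), both of which the paper leaves implicit.
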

\begin{proof}
We assume that $w_j'(v) > 0$, as otherwise the statement is trivial.
Let $Q \subseteq P$ be the set of points whose demands have moved to $v$ by Algorithm~\ref{alg:consolidating_locations}. Further, let $Q_1 = Q\cap \sB(v, \Delta_{z_\g}(v) - 0):= \{u: d(u,v) < \Delta_{z_\g}(v)\}$ and $Q_2 = Q\setminus Q_1$. Note that
$$w_j'(v) = \sum_{u\in Q_1}w_j(u) + \sum_{u\in Q_2}w_j(u).$$
We first get an upper bound on $d(v,v')$. Since $x'_{vv} < 1$, we know that $x'_{vu} > 0$ for some $u\in P'$ other than $v$. Constraint~(\ref{const:up-bd}) implies
that $d(v,u) \leq 4 \Delta_{z_{\g}}(v)$ (here we use that $(x',y')$ is a feasible solution for $\cllp(\{w'_j\}, z_{\g}, \mathbf{4})$). Now recall that $v'$ is a closest point to $v$ in $P'$ other than $v$. Thus, $d(v,v') \leq d(v,u) \leq 4 \Delta_{z_\g}(v)$. In particular,
$d(v,v') \leq 4 d(v,q)$ for points $q\in Q_2$. Further, Algorithm~\ref{alg:consolidating_locations} moves demand from $q$ to $v$,
only if $d(v,q) \leq \frac{2}{\gamma^{1/p}} \R(q)$. For $q\in Q_2$, we get
$d(v,v') \leq 4 d(v,q) \leq \frac{8}{\gamma^{1/p}} \R(q)$.
Therefore,
\begin{align*}w'_{j}(v) \cdot d(v, v')^p &= \sum_{u \in Q_1} w_j(u) d(v, v')^p+ \sum_{u \in Q_2} w_j(u) d(v, v')^p \\& 
\leq \sum_{u \in Q_1} w_j(u) (4 \Delta_{z_\g}(v))^p +  \frac{8^p}{\gamma} \sum_{u \in Q_2} w_j(u) \R(u)^p
\end{align*}
Note that $\sum_{u \in Q_1} w_j(u) \Delta_{z_{\g}}(v)^p \le \vol_v(\Delta_{z_{\g}} - 0) \leq z_{\g}$ by the definition of $\Delta_{z_\g}$.
Using this inequality and Claim~\ref{clm:total-R}, we get
$$
w'_{j}(v) \cdot d(v,v')^p \leq 4^p z_\g + \frac{8^p}{\gamma} z^*.
$$
The statement of the lemma follows, since we assume that $z_g \leq 2z^*$.
\end{proof}

%We assume that $|P'| > k$, as otherwise we simply return solution $C= P'$ that has cost 0 w.r.t. demands $w_j'$. 
Let us now define a forest $F=(P',E)$ on $P'$. We sort all pairs $\{u,v\}$ of distinct points in $P'$ according to the distance between them in ascending order (breaking ties arbitrarily). For every point $u\in P'$, we choose the first pair 
$\{u,v\}$ it appears in and let $u' = v$. Then, $u'$ is a closest point 
to $u$ in $P'$ other than $u$ itself. For every $u$, we add edge $(u, u')$ to 
our graph $F$ (we add every edge at most once). It is easy to see
that the obtained graph is a forest.

\begin{lemma}[cf.~\cite{charikar2002constant}]\label{lem:simple-LP-solution}
Let $(x', y')$ be a feasible $(1-\gamma)$-restricted solution returned by Algorithm~\ref{alg:consolidating_centers} with $\gamma < 1/2$. There exists a feasible solution $(x'',y')$ of cost at most that of $(x',y')$ such that the following holds. For every $v\in P'$, we have
 $x''_{vv} = y'_v \geq 1- \gamma$, $x''_{vv'} = 1 - x''_{vv}$ and $x''_{vu} = 0$ for $u\neq \{v,v'\}$, where $v'$ is as in the definition of $F$.
\end{lemma}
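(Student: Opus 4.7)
The plan is to build $(x'',y')$ by an explicit redistribution: for every $v\in P'$, set $x''_{vv}=y'_v$, $x''_{vv'}=1-y'_v$ where $v'$ is the closest point to $v$ in $P'\setminus\{v\}$ as in the definition of $F$, and $x''_{vu}=0$ for every other $u$. The $y$-variables stay the same. The argument then splits into verifying feasibility of $(x'',y')$ for $\cllp(\{w'_j\},\{\Delta_{z_{\g}}(u)\}_{u\in P},4)$ and then showing that its LP cost does not exceed that of $(x',y')$.

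For feasibility, constraints \eqref{cst:point-assignment} and \eqref{cst:center-count} are immediate: $x''_{vv}+x''_{vv'}=1$ and $y'$ is unchanged. For \eqref{cst:center-cap}, the pair $(v,v)$ satisfies $x''_{vv}=y'_v$ by construction; for the pair $(v,v')$ we have $x''_{vv'}=1-y'_v\leq \gamma$ while $y'_{v'}\geq 1-\gamma>\gamma$ (here I use the hypothesis $\gamma<1/2$), so $x''_{vv'}\leq y'_{v'}$. Constraint \eqref{const:up-bd} is nontrivial only when $x''_{vv'}>0$, i.e.\ $y'_v<1$. In that case $\sum_u x'_{vu}=1>y'_v\geq x'_{vv}$, so there exists $u\neq v$ with $x'_{vu}>0$; since $(x',y')$ is $(1-\gamma)$-restricted, $x'_{vu}\leq y'_u$ forces $u\in P'$. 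Feasibility of $(x',y')$ for $\cllp(\cdot,\cdot,4)$ yields $d(v,u)\leq 4\Delta_{z_{\g}}(v)$, and by definition of $v'$ as a closest point of $P'\setminus\{v\}$ we get $d(v,v')\leq d(v,u)\leq 4\Delta_{z_{\g}}(v)$.

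For the cost bound, fix a group $j$ and a point $v\in P'$. The support of $x'_{v\cdot}$ lies in $P'$ by the $(1-\gamma)$-restricted property again, so for every $u\neq v$ in that support $d(v,u)\geq d(v,v')$. Using $x'_{vv}\, d(v,v)^p=0$ and $\sum_{u\neq v} x'_{vu}=1-x'_{vv}\geq 1-y'_v$, I get
\[
\sum_{u} x'_{vu}\, d(v,u)^p \;=\; \sum_{u\neq v} x'_{vu}\, d(v,u)^p \;\geq\; (1-y'_v)\, d(v,v')^p \;=\; \sum_{u} x''_{vu}\, d(v,u)^p.
\]
Multiplying by $w'_j(v)$ and summing over $v\in P'$ shows the cost of $(x'',y')$ for group $j$ is at most that of $(x',y')$; taking the maximum over $j$ preserves the inequality between LP objectives. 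Finally, since $x''_{vv}=y'_v\geq 1-\gamma$, the resulting solution is still $(1-\gamma)$-restricted with respect to $P'$.

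The only subtle step is preserving \eqref{const:up-bd} after concentrating all non-self mass onto $v'$; this is exactly where the hypothesis $\gamma<1/2$ (needed for the capacity at $v'$) and the $(1-\gamma)$-restricted property of $(x',y')$ (needed to conclude that every neighbor $u$ with $x'_{vu}>0$ is itself in $P'$, so $d(v,v')\leq d(v,u)$) both come into play. Everything else is routine bookkeeping.
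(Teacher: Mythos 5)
Your proof is correct and uses the same explicit construction as the paper ($x''_{vv}=y'_v$, $x''_{vv'}=1-y'_v$, all else zero). The paper's proof only verifies constraint~\eqref{cst:center-cap} explicitly and dismisses the rest as ``easy to see''/``optimal choice''; you additionally spell out the verification of the distance constraint~\eqref{const:up-bd} (using that the support of $x'_{v\cdot}$ lies in $P'$ and hence $d(v,v')\leq d(v,u)\leq 4\Delta_{z_\g}(v)$) and the cost comparison, which are exactly the nontrivial details the paper leaves implicit.
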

\begin{proof}
We simply let $x''_{vv} = 1 - y'_v$ for every $v\in P'$,
$x''_{vv'} = 1 - y'_v$, and $x''_{vu} = 0$ for $u\notin\{v,v'\}$.
Since $(x,y)$ is a $(1-\gamma)$-restricted solution and $\gamma < 1/2$, $x''_{vu} = 1 -y'_{v} \leq 1 - (1 - \gamma) = \gamma \leq 1 - \gamma \leq y'_u$, as required. It is easy to see that all other LP constraints are also satisfied. Further, 
we chose $x''$ in an optimal way for the given $y'$. 
Thus, the cost of $(x'', y')$ is at most that of $(x',y')$.
\end{proof}
Fix $\gamma = 1/10$. For each $u\in P'$, let $p_u = (1- y'_v)/\gamma$. Note that $0\leq p_u \leq 1$ since $0\leq 1 - y'_v \leq \gamma$. We have
$$\sum_{u\in P'} p_u = \gamma^{-1} \sum_{u\in P'} (1- y'_u) = \gamma^{-1}\left(|P'| - 
\sum_{u\in P'} y'_u\right) \geq \gamma^{-1}(|P'| - k).$$
For every tree $T$ in the forest $F$, do the following. Choose an arbitrary root $r$ in $T$. Partition $T$ into layers based on their depth, starting with the root. Let $A_T$ be the union of every other layer; that is, $A_T$ is the set of vertices of even depth. Finally, let $A$ be the union of all sets $A_T$ over $T$ in $F$. Note that all neighbors of $u\in A$ are not in $A$; all neighbors of $u\notin A$ are in $A$.

If $\sum_{u\in A} p_u \geq \frac{1}{2\gamma}(|P'| - k)$, let $S = A$; otherwise, let $S = P'\setminus A$. In either case, $\sum_{u\in S} p_u \geq \frac{1}{2\gamma}(|P'| - k)$.
Finally, we construct our combinatorial solution $C$:
\begin{itemize}
\item if $|P'| \leq k$, let $C= |P'|$; otherwise, proceed as follows
\item add all points from $P'\setminus S$ to $C$,
\item add each point $v \in S$ to $C$ with probability $1-p_v$ (independently).
\end{itemize}
We show now that $|C| \leq k$ with probability at least $3/4$
and $\faircost(C) \leq O\left(2^{3p}\right)\frac{\log \ell}{\log\log \ell} \cdot z^*$ (where $z^*$ is the cost of the optimal solution) with probability at least $1 - 1/(2\ell)$.

\iffalse
\begin{algorithm}[h]
	\begin{algorithmic}[1]
		\STATE {\bfseries Input:} $y''$ %\COMMENT{The value of corresponding $z''$ is exactly one.}
		\STATE $C \leftarrow \emptyset$
		\FORALL{$v\in P'$}
				\STATE $p_v \leftarrow \eta\cdot (1- y''_v)$
		\ENDFOR
		
		\COMMENT{Construct the nearest neighbor forest $F$ as in~\cite{charikar2002constant}}
		\FORALL{$v\in P'$}
				\STATE {\bf add} edge $(v, \nn_{P'}(v))$
		\ENDFOR
		\STATE {\bf pick} a set vertices $S$ in every other layer in each component (i.e., tree or cycle of length $2$) s.t. the total sum of $p_v$ over $S$ is at least $\eta\cdot (|P'|-k)/2$.
		\FORALL{$v\in P'$}
			\IF{$v\notin S$}
				\STATE {\bf add} $v$ to $C$
			\ELSE
				\STATE {\bf add} $v$ to $C$ w.p. $1-p_v$
			\ENDIF
		\ENDFOR
		\RETURN $C$
	\end{algorithmic}
	\caption{Randomized rounding.}
	\label{alg:rand_round}
\end{algorithm}
\fi

\begin{theorem}\label{thm:rounding}
With probability at least $1/4$, $|C| \leq k$ and $\faircost(C, \{w_j'\}) = O(2^{3p} \cdot \frac{\log\ell}{\log \log\ell} z^*)$.
\end{theorem}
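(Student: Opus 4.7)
The plan is to verify $|C|\le k$ and the fair-cost bound as two separate events, each holding with constant probability, and combine them via a union bound. Throughout I assume $|P'|>k$ (otherwise $C=P'$ deterministically and the statement is immediate). Write $|C|=|P'|-T$ with $T=\sum_{v\in S}(1-Y_v)$, where $Y_v\in\{0,1\}$ are independent indicators that $v\in S$ is added to~$C$. By the construction of $S$, $\E[T]=\sum_{v\in S}p_v\ge\frac{1}{2\gamma}(|P'|-k)=5(|P'|-k)$, so a multiplicative Chernoff lower-tail bound (with deviation $\delta=4/5$) gives $\Pr[|C|>k]=\Pr[T<|P'|-k]\le e^{-\delta^2\E[T]/2}\le e^{-8/5}<1/4$.

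For the cost, fix a group $j$ and define $X_v^j:=w'_j(v)\,d(v,v')^p\cdot\mathbf{1}[v\in S,\,v\notin C]$ for $v\in P'$. Whenever $v\notin S$ (so $v\in C$) or $v\in S\cap C$, point $v$ contributes $0$ to $\cost(C,w'_j)$; when $v\in S\setminus C$, the forest-neighbor $v'$ lies outside $S$ because $S$ is an independent set of $F$, so $v'\in C$ and $d(v,C)\le d(v,v')$. Hence $\cost(C,w'_j)\le\sum_v X_v^j$, the $X_v^j$ are independent, and Lemma~\ref{lem:cost-bound} (invoked only when $p_v>0$, i.e., $x'_{vv}<1$) gives $X_v^j\le M:=O(8^p)\,z^*$. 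Using $x''_{vv}=y'_v$ and $x''_{vv'}=1-y'_v=\gamma p_v$ from Lemma~\ref{lem:simple-LP-solution},
\[ \E\!\left[\textstyle\sum_v X_v^j\right]=\sum_{v\in S}p_v\, w'_j(v)\, d(v,v')^p \le \tfrac{1}{\gamma}\sum_{v\in P'}(1-y'_v)w'_j(v)d(v,v')^p \le \tfrac{2^p}{\gamma}\,z^* = 10\cdot 2^p\,z^*, \]
where the final inequality uses that the LP cost of $(x'',y')$ is at most $2^p z^*$ (by Lemma~\ref{lem:consolidating_centers} together with the cost non-increase in Lemma~\ref{lem:simple-LP-solution}). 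Consequently $\sigma^2:=\Var[\sum_v X_v^j]\le M\cdot\E[\sum_v X_v^j]=O(16^p)\,(z^*)^2$.

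Now I apply Bennett's inequality (Theorem~\ref{thm:Bennett}) with $t:=c\,M\,\log\ell/\log\log\ell$ for a sufficiently large absolute constant~$c$. Then $tM/\sigma^2\ge t/\E[\sum_v X_v^j]=\Omega(c\cdot 4^p\,\log\ell/\log\log\ell)$, so $\log(tM/\sigma^2+1)=\Omega(\log\log\ell)$, and the exponent $\frac{t}{2M}\log(tM/\sigma^2+1)$ is at least $\Omega(c\log\ell)\ge\log(2\ell)$. Hence $\Pr[\sum_v X_v^j>t]\le 1/(2\ell)$, and a union bound over $j\in[\ell]$ yields $\Pr[\max_j\cost(C,w'_j)>t]\le 1/2$. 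Combined with $\Pr[|C|>k]\le 1/4$, both conditions hold simultaneously with probability at least $1/4$, and $t=O\bigl(2^{3p}\,\tfrac{\log\ell}{\log\log\ell}\,z^*\bigr)$ matches the claim.

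The main obstacle is the Bennett step: one must ensure $\log(tM/\sigma^2+1)=\Omega(\log\log\ell)$, which rests on $M/\E[\sum_v X_v^j]=O(4^p)$ being independent of~$\ell$. This is precisely the benefit of the strengthened constraints~\eqref{const:up-bd} via Lemma~\ref{lem:cost-bound}: absent those, $d(v,v')^p$ could be arbitrarily large relative to the average contribution, and the concentration would degrade to the $\Omega(\ell)$-factor loss exhibited by the natural LP.
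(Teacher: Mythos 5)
Your proof is correct and mirrors the paper's argument essentially step by step: bound $\Pr[|C|>k]$ by a Chernoff lower-tail estimate on the number of dropped points, observe that for $v\in S$ the forest-neighbor $v'$ is deterministically in $C$ so the per-point contribution is $w'_j(v)d(v,v')^p\mathbf{1}[v\notin C]$, invoke Lemma~\ref{lem:cost-bound} for the pointwise bound $M=O(8^p)z^*$ and the LP cost for the expectation bound, and close with Bennett plus a union bound over groups. The only cosmetic difference is that the paper explicitly centers the variables $Y_v$ before invoking Bennett and targets a per-group failure probability of $1/(2\ell^2)$, whereas you work with the uncentered sum and target $1/(2\ell)$; both give the same $1/4$ overall success probability, and the centering you elide is easily absorbed since $\E[\sum_v X_v^j]=O(2^p z^*)$ is a vanishing fraction of $t$.
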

\begin{proof}
If $|P'| \leq k$, then $|C| = |P'| \leq k$ and the cost of $C$ w.r.t. demands $\{w_j'\}$ is 0; thus, the statement of the theorem trivially holds. We assume below that $|P'|  > k$.
By Lemmata~\ref{lem:consolidating_centers} and~\ref{lem:simple-LP-solution},
$(x', y')$ and $(x'', y')$ are feasible solutions for 
$\cllp(\{w'_j\}, z_\g, 4)$. Let $z$ be the LP cost of $(x'',y')$ w.r.t. demands $\{w_j'\}$. Then $z \leq 2^p z^*$.

First, we bound the size of $C$. Let $X_v$ be the indicator random variable of the event $v\notin C$; i.e., $X_v =1$ if $v\notin C$ and $0$ otherwise. Define $X = \sum_{v\in P'} X_v$. Now we are lower bounding the number $X$ of points in $P'$
that are \textit{not} centers in $C$.
\begin{align*}
\E[X] =\E\left[\sum_{v\in P'} X_v\right] = \sum_{v\in S} \E[X_v] + \sum_{v\in P'\setminus S} \E[X_v] = \sum_{v\in S} p_v \geq \frac{|P'| - k}{2\gamma} .
\end{align*}
Applying the Chernoff bound, we get for $\varepsilon = 1 - 2\gamma$,
\begin{align*}
\Pr( X \leq |P'| - k) \leq \Pr(X \leq (1 - \varepsilon)\E(X)) \leq \exp(-\left.\varepsilon^2 \E[X]\right/2)
\end{align*}
Note that  $\E[X] \geq \frac{1}{2\gamma} = 5$ and $\varepsilon \geq 4/5$, since $|P'| \geq k + 1$ and $\gamma =1/10$. Thus, $\Pr[X \leq  (1 - \varepsilon)\E(X)] < 1/4$. Hence, with probability at least $3/4$,
\begin{align}\label{eq:center-size}
|C| \leq |P'| - X \leq |P'| -  (|P'| - k) =  k.
\end{align}

Fix $j\in [\ell]$. Now we show that $\cost(C, w_j')$ is at most $O\left(2^{3p-1} \log \ell / \log\log \ell\right) z^*$ with high probability. Then applying the union bound, we will get the same bound for all $j\in[\ell]$ and thus for $\faircost(C, \{w_j'\})$.

For every $v\in P'$, let $Y_v = w'_j(v) \cdot d(v, C)^p$. Note that for every $v\in P'$ either $v\notin S$ or $v'\notin S$ (since $v$ and $v'$ are neighbors in $F$).
Therefore, we always have that at least one of the points $v$ and $v'$ is in $C$. Further, if $X_v = 0$, then $v\in C$, and thus $d(v, C) = 0$;
if $X_v = 1$, then $d(v, C) = d(v,v')$.
Hence,
$Y_v = w'_j(v) d(v,v')^p X_v$.
From Lemma~\ref{lem:cost-bound}, we get for each $v\in P'$,
\begin{align*}%\label{eq:Yv-bound}
Y_v \leq w'_j(v) \cdot d(v, v')^p \leq \left(2 \cdot 2^{2p}+ \frac{2^{3p}}{\gamma}\right) z^*= O\left(2^{3p} z^*\right)\quad\text{(always)}.
\end{align*}
Let $Z_j : = \sum_{v\in P_j} Y_v = \sum_{v\in S} Y_v$ be the cost of group $j$ w.r.t.~weights $\{w_j'\}$. Note that all random variables $\{Y_v\}_{v\in P_j}$ are independent. Random variables $Y_u$ for $u\notin S$ are identically equal to 0. We have,
\begin{align*}
\E[Z_j] \leq \sum_{v\in S} w'_j(v) d(v, v')^p p_v = \frac{1}{\gamma} \sum_{v\in S} w'_j(v) d(v, v')^p \underbrace{(1-y'_v)}_{x''_{vv'}} \leq \frac{z}{\gamma} \leq \frac{2^p}{\gamma} z^* %\label{eq:exp-Yj},
\end{align*}
(recall that $z$ is the cost of the LP solution $(x'', y')$)
and 
\begin{align*}
\Var[Z_j]
= \sum_{v\in S} \Var[Y_v] 
\leq \sum_{v\in S} (w'_j(v) d(v, v')^p)^2\, p_v 
\stackrel{\text{\tiny{by Lemma~\ref{lem:cost-bound}}}}{\leq} O\left(2^{3p}z^*\right) \cdot \E[Z_j] = O\left(2^{4p} ({z^*})^2\right). %\label{eq:var-Yj}
\end{align*}

Now we will use Bennett's inequality (Theorem~\ref{thm:Bennett}) 
to bound random variables $Y_j$.
To do so, we define \textit{zero-mean} versions of random variables $Y_v$ and $Z_j$. Let $Y'_v := Y_v - \E[Y_v]$ and $Z'_j := \sum_{v\in P'_j} Y'_v$.
Note that, $|Y'_v| = O(2^{3p}z^*)$ (always) and $\Var[Z'_j] = \Var[Z_j] = O\left(2^{4p} ({z^*})^2\right)$.
Applying Bennett's inequality to  $Z'_j$, we get for $\tau > 0$
\begin{align*}
\Pr\left(Z_j' \geq \tau z^*\right)
\leq \exp\left(-\Omega\left(\frac{\tau}{2^{3p}}\right) \log\left(\Omega\left(\frac{\tau}{2^p}\right) + 1\right)\right)
\end{align*}
Letting $\tau = c \cdot 2^{3p} \frac{\log \ell}{\log\log \ell}$
for large enough $c$, we get that $\Pr\left(Z_j' \geq \tau z^*\right) \leq \frac{1}{2\ell^2}$, which implies that  $\Pr\left(Z_j \geq c' \cdot 2^{3p} \frac{\log \ell}{\log\log \ell} z^*\right) \leq \frac{1}{2\ell^2}$ for some absolute constant $c'$.
Applying the union bound, we get that with probability at least $1 -\frac{1}{2\ell}$, the following upper bound on the fair cost holds
\begin{equation} \label{eq:main}
\faircost(C, \{w_j'\})=
\max_{j\in [\ell]} \sum_{v\in P'} w'_{j}(v)\cdot d(v, C)^p = O\left(2^{3p} \cdot \frac{\log \ell}{\log \log \ell} z^* \right).
\end{equation}

We conclude that with probability at least $1 - \frac{1}{2\ell} - 1/4 \geq \frac{1}{4}$, both $|C| \leq k$ and (\ref{eq:main})
holds.
\end{proof}

\begin{algorithm}[h]
	\begin{algorithmic}[1]
		\STATE {\bfseries Input:} An instance $\{w_j\}$ of socially fair clustering, $z_\g\in [z^*, 2z^*]$, desired error probability $\varepsilon$
		\STATE $\gamma = 1/10$, $\lambda = 2$
		\STATE Solve $\cllp(\{w_j\}, z_\g, 2)$. Let $(x,y)$ be an optimal fractional solution.
		\STATE Run Algorithm~\ref{alg:consolidating_locations} and obtain new demands $(\{w'_j\})$
		and set $P'$
    	\STATE Run Algorithm~\ref{alg:consolidating_centers} and obtain LP solution $(x',y')$
    	\STATE Define $x''$ as in Lemma~\ref{lem:simple-LP-solution}.
    	\STATE Construct forest $F$ and find set $S$.

%		\REPEAT
			\STATE Run randomized rounding $\lceil \log_{4/3} (1/\varepsilon)\rceil$ times
			\STATE Let $C$ be the best of the solutions (with at most $k$ centers) that randomized rounding generates.
%     	\UNTIL{$|C| \leq k$ and $\faircost(C, \{w_j'\}) = O(e^{O(p)}\frac{\log \ell}{\log \log \ell})\cdot z^*$}
     \RETURN C
	\end{algorithmic}
	\caption{$(e^{O(p)}\frac{\log \ell}{\log \log \ell})$-approximation algorithm for socially fair $\ell_p$-clustering}
	\label{alg:main}
\end{algorithm}

\begin{proof}{\bf of Theorem~\ref{thm:main-approx}}
We put together all the steps we described in this paper.
The entire algorithm is shown as Algorithm~\ref{alg:main}.
As discussed in Appendix~\ref{sec:appendix-guess-z}, we may assume that $z_\g\in [z^*, 2z^*]$ is given to us.
By Theorem~\ref{thm:rounding}, the probability that our randomized rounding procedure will find a feasible solution $C$ for demands $\{w_j'\}$ of cost at most $O\left(2^{3p}\right)\frac{\log \ell}{\log \log \ell}\cdot z^*$ is at least $1/4$.
Since we run randomized rounding $\lceil \log_{4/3} (1/\varepsilon)\rceil$ times, we will succeed at least once with probability at least $1-\varepsilon$.
By Lemma~\ref{lem:step1}, the cost of $C$ w.r.t. the original demands $\{w_j\}$ is upper bounded as follows,
\begin{align*}
    \faircost(C,\{w_j\}) 
    \leq \frac{2^{2p-1}}{\gamma} \cdot z^* + 2^{p-1} \cdot \faircost(C, \{w'_j\}) 
    = O\left(2^{4p}\right)\frac{\log \ell}{\log \log \ell} z^*
\end{align*}
\end{proof}

\section{Deterministic Rounding for Bicriteria Approximation}
\label{sec:bicriteria}
In this section we show how to get our bicriteria approximation algorithm.

\begin{proof}{\bf of Theorem~\ref{thm:bicriteria-approx}}
We solve LP relaxation $\cllp(\{w_j\}, \{\Delta_{z_\g}(u)\}_{u\in P}, 2)$ and then run Algorithms~\ref{alg:consolidating_locations} and~\ref{alg:consolidating_centers}. We obtain a fractional solution $(x', y')$ and set $P'$.
By Lemma~\ref{lem:consolidating_locations}, $(x', y')$ is a $(1-\gamma)$-restricted solution for  $\cllp(\{w'_j\}, \{\Delta_{z_\g}(u)\}_{u\in P}, 4)$ w.r.t.~$P'$ of cost at most $2^p z^*$.
We return set $C = P'$.

By LP constraint~(\ref{cst:center-count}), we have $\sum_{u\in P} y_u \leq k$. Now, since $(x', y')$ is a $(1-\gamma)$-restricted solution,
$y_u \geq 1 - \gamma$ for $u\in P'$.
 Therefore, $|C| = |P'| \leq k/(1 - \gamma)$, as required.
Note that for every $u\in P'$, we have $d(u,C)^p = 0$. Therefore,
\begin{align*}
\faircost(C, \{w'_j\})
= 0.
\end{align*}
Now, we apply Lemma~\ref{lem:step1} to bound $\faircost(C, \{w_j\})$,
\begin{align*}
\faircost(C, \{w_j\})
\leq \frac{2^{2 p-1}}{\gamma} \cdot z^* + 2^{p-1}\cdot \faircost(C, \{w'_j\})= 
\frac{2^{2p-1}}{\gamma} z^* .
\end{align*}
\end{proof}
\acks{We thank Viswanath Nagarajan for bringing papers 
\citep{anthony2010plant} and \citep{bhattacharya2014new}
to our attention. YM was supported by NSF awards CCF-1718820, CCF-1955173, and CCF-1934843. AV was supported by NSF award CCF-1934843.}

\bibliography{ref-cls}
\bibliographystyle{abbrvnat}
\appendix
\section{Finding the value of \texorpdfstring{$z_\g$}{z}}\label{sec:appendix-guess-z}
In this section, we explain why we may assume that we know $z_\g\in [z^*,  2z^*]$. 
\begin{observation}
There exist $u, v\in P$ and $j\in[\ell]$ such that
$$w_j(u)\cdot d(u,v)^p \leq z^* \leq n\cdot w_j(u) \cdot d(u,v)^p$$
\end{observation}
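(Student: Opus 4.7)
The plan is to pick the group attaining the maximum in the definition of $z^*$, pick the single heaviest term in the corresponding sum, and then read off both bounds from elementary averaging.

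In more detail, I would fix an optimal solution $C^*$ with $\faircost(C^*)=z^*$, and let $j^*\in[\ell]$ be a group achieving the max, so that
\[
z^* \;=\; \sum_{u\in P_{j^*}} w_{j^*}(u)\,d(u,C^*)^p.
\]
Since the sum has at most $|P_{j^*}|\le n$ nonzero terms, by averaging there exists some $u^*\in P_{j^*}$ with
\[
w_{j^*}(u^*)\, d(u^*,C^*)^p \;\ge\; \frac{z^*}{n}.
\]
I then take $v^*\in C^*$ to be a closest center to $u^*$, so that $d(u^*,v^*)=d(u^*,C^*)$; note $u^*,v^*\in P$ because $C^*\subseteq P$ and $P_{j^*}\subseteq P$. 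This immediately yields the right-hand inequality $z^* \le n\cdot w_{j^*}(u^*)\cdot d(u^*,v^*)^p$.

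For the left-hand inequality, the term $w_{j^*}(u^*)\, d(u^*,v^*)^p$ is just one summand in the expression for $z^*$, and every summand is non-negative, so it is bounded above by the full sum, which equals $z^*$. The only edge case is $z^*=0$, and there both inequalities reduce to $0\le 0$ for any choice, so the same $(u^*,v^*,j^*)$ still works (in that case the term $w_{j^*}(u^*)d(u^*,v^*)^p$ is forced to be $0$).

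There is no real obstacle here; the only thing to be careful about is making sure the same triple $(u,v,j)$ satisfies both inequalities simultaneously, which is why I pick $j$ first (as the maximizer), then $u$ (as the heaviest term in that group's sum), then $v$ (as the closest center in $C^*$ to $u$). I would also note $v^*\in C^*\subseteq P$ to justify $v\in P$, which the statement requires.
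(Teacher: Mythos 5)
Your proof is correct and follows the same approach as the paper: fix the group attaining the max, choose the point with the largest (or at least average-or-above) contribution to that group's cost, and take the nearest center in $C^*$ as $v$. The only cosmetic difference is that you phrase the choice of $u^*$ via averaging rather than explicitly taking the maximizer, but these are equivalent here.
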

\begin{proof}
Let $C^*$ be an optimal solution.
Assume that group $j$ has the largest cost: $\cost(C^*, w_j) \geq \cost(C^*, w_i)$ for all $i\in[\ell]$. Then,  $z^* = \cost(C^*, w_j)$.
Consider $u\in P_j$ that maximizes $w_j(u)\cdot d(u,C^*)^p$. 
Let $v \in C^*$ be the closest center to $u$. Note that 
$w_j(u)\cdot d(u,v)^p = w_j(u)\cdot d(u,C^*)^p$
and
$$w_j(u)\cdot d(u,v)^p 
\leq \cost(C^*, w_j)
\leq |P_j|\cdot w_j(u) \cdot d(u,v)^p
\leq n\cdot w_j(u) \cdot d(u,v)^p.
$$
\end{proof}

Now we assume that we have an algorithm that with high probability, returns a $\beta$-approximate solution for socially fair $\ell_p$-clustering when $z^* \leq z_\g \leq 2z^*$.
We run this $\beta$-approximation algorithm with different estimates for $z_\g$ of the form $2^i w_j(u) \,d(u,v)^p$ where $i\in\{0,\cdots,  \lfloor\log_2 n\rfloor\}$, $u,v\in P$,
and $j\in [\ell]$. We output the best solution we find.
Note that our algorithm runs in polynomial time, since it invokes the $\beta$-approximation
algorithm at most $O(n^2\ell \log n)$ times.
We note this step can often be significantly sped up; in particular, if the cost is represented as a floating-point number with $D$ binary digits, we can run the $\beta$-approximation algorithm at most $D$ times.
\section{Integrality Gap for Relaxation \texorpdfstring{$\cllp(\{w_j\}, \{\Delta_z(u)\}_{u\in P}, 2)$}{ClusterLP}}\label{sec:int_gap}

In this section, we define what the integrality gap for $\cllp(\{w_j\}, \{\Delta_z(u)\}_{u\in P}, 2)$ is (this is not necessarily straightforward, since the LP depends on $z$) and show that the gap is $\Omega(\log \ell/\log\log\ell)$. 

\paragraph{Main approximation result restated.} The main result of our paper can be formulated as follows (which is implicit in the proof of Theorem~\ref{thm:main-approx}). Let $f(z)$ be the LP cost of the optimal solution for $\cllp(\{w_j\}, \{\Delta_z(u)\}_{u\in P}, 2)$ and $g(z) = \max(z, f(z))$. Then the following items hold.
\begin{enumerate}
    \item For every $z$, there exists a combinatorial solution of cost at most $e^{O(p)} \frac{\log\ell}{\log\log\ell}\, g(z)$. Further, this solution can be found in polynomial time.
    \item For $z \geq z^*$, $g(z) = z$.
\end{enumerate}
To obtain our result, we take $z_\g\in [z^*, 2z^*]$. By item 2, $g(z_g) = z_g \leq 2z^*$. By item 1, we can find a solution of cost $e^{O(p)} \frac{\log\ell}{\log\log\ell} g(z_\g) \leq e^{O(p)} \frac{\log\ell}{\log\log\ell} z^*$.

\paragraph{Integrality gap.} We now show that dependence on $\ell$ in item $1$ cannot be improved. Namely, we prove that there exists a sequence of instances and parameters $z$ such that every combinatorial solution has cost at least $\Omega\left(\frac{\log\ell}{\log\log \ell}g(z)\right)$.

We construct the following instance for every $k\geq 1$.
Let $t = \floor{\sqrt{k}}$ and $n=k + t$. Consider a metric space $P$ on $n$ points, in which the distance between every two distinct points is $1$.
For every set $A$ of $t$ points, create a group $P_A = A$. Note that the total number of groups is $\ell =\binom{n}{t} = e^{\Theta(\sqrt{k}\log k})$. 

Let $z = 1$. Then $\Delta_z(u) = 1$ for every $u$. Note
that constraint~\ref{const:up-bd} is trivially satisfied by any LP solution, since there are simply no two points $u$ and $v$ with $d(u,v) > 2\Delta_z(v) = 2$.
Observe that the cost of every solution $C$ of size $k$ is $t$, since the cost of $C$ for group $P_{P\setminus C} = P\setminus C$ is
$$\sum_{u\in P\setminus C} d(u, C)^p = \sum_{u\in P\setminus C} 1 = t$$
and the cost of every group $P_A$ is at most $|P_A| =t$.
Now we construct an LP solution of cost at most $1$.
For every $u$, we let $y_u = k/n$, $x_{uu} = k/n$, $x_{uv_u} = 1 - k/n = t/n$ for an arbitrary $v_u\neq u$,
and $x_{uv'} = 0$ for $v'\notin \{u,v_u\}$. It is immediate that this is a feasible LP solution. Its cost
is the maximum over all groups $P_A$ of
$$\sum_{u\in P_A} \left(\frac{k}{n} \cdot d(u,u)^p + \frac{t}{n} \cdot  d(u, v_u)^p\right) = \sum_{u\in P_A} \frac{t}{n} = t^2/n \leq 1.$$
We conclude that $f(z) \leq 1$ and $g(z) = \max(z, f(z)) = 1$.
Therefore, the integrality gap of this instance is 
$$t / 1 = \floor{\sqrt{k}} = \Theta\left(\frac{\log \ell}{\log\log \ell}\right).$$
\end{document}